\documentclass[sigconf]{acmart}

\usepackage{subfig}
\usepackage{xspace}
\usepackage{enumitem}
\usepackage{amsfonts}
\usepackage[algo2e, ruled,vlined,linesnumbered]{algorithm2e} 

\usepackage[utf8]{inputenc}
\usepackage[english]{babel}

\newtheorem{theorem}{Theorem}[section]

\newtheorem{lemma}[theorem]{Lemma}

\newtheorem{assumption}[theorem]{Assumption}

\newcommand{\allreduce}{\emph{allreduce}\xspace}
\newcommand{\eg}{\emph{e.g.}\xspace}

\newcommand{\be}{\begin{equation}}
\newcommand{\ee}{\end{equation}}
\newcommand{\bee}{\begin{equation*}}
\newcommand{\eee}{\end{equation*}}
\newcommand{\bea}{\begin{eqnarray}}
\newcommand{\eea}{\end{eqnarray}}
\newcommand{\beaa}{\begin{eqnarray*}}
\newcommand{\eeaa}{\end{eqnarray*}}

\newcommand{\E}{\mathbb{E}}


\newcommand{\xt}{x^t}
\newcommand{\xte}{x^{t+1}}

\newcommand{\revise}[1]{{\color{black}#1}}

\AtBeginDocument{%
  \providecommand\BibTeX{{%
    \normalfont B\kern-0.5em{\scshape i\kern-0.25em b}\kern-0.8em\TeX}}}

\setcopyright{acmcopyright}

\copyrightyear{2020} 
\acmYear{2020} 
\setcopyright{usgovmixed}\acmConference[HPDC '20]{Proceedings of the 29th International Symposium on High-Performance Parallel and Distributed Computing}{June 23--26, 2020}{Stockholm, Sweden}
\acmBooktitle{Proceedings of the 29th International Symposium on High-Performance Parallel and Distributed Computing (HPDC '20), June 23--26, 2020, Stockholm, Sweden}
\acmPrice{15.00}
\acmDOI{10.1145/3369583.3392681}
\acmISBN{978-1-4503-7052-3/20/06}



\begin{document}
\fancyhead{}
\title{FFT-based Gradient Sparsification for the Distributed Training of Deep Neural Networks}



\author{Linnan Wang}
\affiliation{%
 \institution{Brown University}
 \city{Providence, RI}
 \country{USA}
}

\author{Wei Wu}
\affiliation{%
 \institution{Los Alamos National Laboratory}
 \city{Los Alamos, NM}
 \country{USA}
}

\author{Junyu Zhang}
\affiliation{
 \institution{University of Minnesota, Twin Cities}
 \city{Minneapolis, MN}
 \country{USA}
}

\author{Hang Liu}
\affiliation{%
 \institution{Stevens Institute of Technology}
 \city{Hoboken, NJ}
 \country{USA}
}

\author{George Bosilca}
\affiliation{%
 \institution{University of Tennessee}
 \city{Knoxville, TN}
 \country{USA}
}

\author{Maurice Herlihy}
\affiliation{%
 \institution{Brown University}
 \city{Providence, RI}
 \country{USA}
}

\author{Rodrigo Fonseca}
\affiliation{%
 \institution{Brown University}
 \city{Providence, RI}
 \country{USA}
}

\renewcommand{\shortauthors}{Wang and Wu, et al.}

\begin{abstract}
The performance and efficiency of distributed training of Deep Neural Networks (DNN) highly depend on the performance of gradient averaging among participating processes, a step bound by communication costs. 
There are two major approaches to reduce communication overhead: overlap communications with computations (lossless), or reduce communications (lossy). The lossless solution works well for linear neural architectures, e.g. VGG, AlexNet, but more recent networks such as ResNet and Inception limit the opportunity for such overlapping. Therefore, approaches that reduce the amount of data (lossy) become more suitable. 
In this paper, we present a novel, explainable lossy method that sparsifies gradients in the frequency domain, in addition to a new range-based float point representation to quantize and further compress gradients. These dynamic techniques strike a balance between compression ratio, accuracy, and computational overhead, and are  optimized to maximize performance in heterogeneous environments. 

Unlike existing works that strive for a higher compression ratio, we stress the robustness of our methods, and provide guidance to recover accuracy from failures. To achieve this, we prove how the FFT sparsification affects the convergence and accuracy, and show that our method is guaranteed to converge using a diminishing $\theta$ in training. Reducing $\theta$ can also be used to recover accuracy from the failure. Compared to STOA lossy methods, e.g., QSGD, TernGrad, and Top-k sparsification, our approach incurs less approximation error, thereby better in both the wall-time and accuracy. On an 8 GPUs, InfiniBand interconnected cluster, our techniques effectively accelerate AlexNet training up to 2.26x to the baseline of no compression, and 1.31x to QSGD, 1.25x to Terngrad and 1.47x to Top-K sparsification.
\end{abstract}

\begin{CCSXML}
<ccs2012>
<concept>
<concept_id>10010520.10010521.10010542.10010294</concept_id>
<concept_desc>Computer systems organization~Neural networks</concept_desc>
<concept_significance>500</concept_significance>
</concept>
<concept>
<concept_id>10010520.10010521.10010542.10010546</concept_id>
<concept_desc>Computer systems organization~Heterogeneous (hybrid) systems</concept_desc>
<concept_significance>100</concept_significance>
</concept>
<concept>
<concept_id>10003752.10003809.10010031.10002975</concept_id>
<concept_desc>Theory of computation~Data compression</concept_desc>
<concept_significance>300</concept_significance>
</concept>
<concept>
<concept_id>10003752.10010070.10010071</concept_id>
<concept_desc>Theory of computation~Machine learning theory</concept_desc>
<concept_significance>300</concept_significance>
</concept>
</ccs2012>
\end{CCSXML}

\ccsdesc[500]{Computer systems organization~Neural networks}
\ccsdesc[100]{Computer systems organization~Heterogeneous (hybrid) systems}
\ccsdesc[300]{Theory of computation~Data compression}
\ccsdesc[300]{Theory of computation~Machine learning theory}

\keywords{Neural Networks, Machine Learning, FFT, Gradient Compression, Loosy Gradients}

\maketitle

\section{Introduction}
Parameter Server (PS) and allreduce-style communications are two core parallelization strategies for distributed DNN training. In an iteration, each worker produces a gradient, and both parallelization strategies rely on the communication network to average the gradients across all workers. The gradient size of current DNNs is at the scale of $10^2$ MB, and, even with the state-of-the-art networks such as Infiniband, repeatedly transferring such a large volume of messages over millions of iterations is prohibitively expensive. Furthermore, the tremendous improvement in GPU computing and memory speeds (e.g., the latest NVIDIA TESLA V100 GPU features a peak performance of 14 TFlops on single-precision and memory bandwidth of 900 GB/s with HBM2) further underscores communication as a bottleneck.


Recently, several methods have shown that training can be done with a lossy gradient due to the iterative nature of Stochastic Gradient Descent (SGD). It opens up new opportunities to alleviate the communication overhead by aggressively compressing gradients. One approach to compress the gradients is \emph{quantization}. For example, Terngrad~\cite{wen2017terngrad} maps a gradient into [-1, 0, 1], and QSGD~\cite{alistarh2017qsgd} stochastically quantizes gradients onto a uniformly discretized set larger than that of Terngrad. Such coarse approximation not only incurs large errors between the actual and quantized gradients as we demonstrate in Figure~\ref{gradient_dist_alg_comps} [QSGD, TernGrad], but also fails to exploit the bit efficiency in the quantization (Figure~\ref{quantization_scheme}). 
Another approach to gradient compression, \emph{sparsification}, only keeps the top-k largest gradients \cite{han2015deep,aji2017sparse,alistarh2018convergence}. Similarly, Top-k loses a significant amount of actual gradients around zeros to achieve a high compression ratio (Figure~\ref{gradient_dist_alg_comps}, [Top-k]). In summary, existing lossy methods greatly drop gradients, incur large approximation errors (Figure~\ref{recon_error}), leading to the deterioration of the final accuracy (Table~\ref{algorithm_results}). To avoid compromising the convergence speed, both \textit{quantization} and \textit{sparsification} must limit the compression ratio, leading to sub-optimal improvement of the end-to-end training wall time.

In this paper, we propose a gradient compression framework that takes advantages of both $sparsification$ and $quantization$ with two novel components, FFT-based sparsification, and a range-based quantization. FFT-based sparsification allows removing the redundant information, while preserving the most relevant information (Figure~\ref{gradient_dist_alg_comps} [FFT]). As a result, FFT incurs fewer errors in approximating the actual gradients (Figure~\ref{recon_error}), thereby better in accuracy than QSGD, TernGrad, and Top-K (Table~\ref{algorithm_results}). We treat the gradient as a 1D signal, and drop near-zero coefficients in the frequency domain, after an FFT. Deleting some frequency components after the FFT introduces magnitude errors, but the signal maintains its distribution (Figure~\ref{FFT_Time_Sparsification}). As a result, the sparsification in the frequency domain can achieve the same compression ratio as in the spatial domain but preserving more relevant information.

To further improve the end-to-end training wall time, we introduce a new range-based variable precision floating point representation to quantize and compress the gradient frequencies after sparsification. Most importantly, unlike the uniform quantization used in existing approaches, the precision of representable floats in our method can be adjusted to follow the distribution of the original gradients (Figure~\ref{adjustable_representation_range}). The novel range-based design allows us to fully exploit the precision given limited bits so that the approximation error can be further reduced. By combining \textit{sparsification} and \textit{quantization}, our framework delivers a higher compression ratio than the single method, resulting in shorter end-to-end training wall time than QSGD, Terngrad, and Top-k.
 
Lastly, our compression framework is highly efficient and scalable. The primitive algorithms in our compression scheme, such as FFT, top-k select, and precision conversions, are efficiently parallelizable and thus GPU-friendly. We resort to existing highly optimized GPU libraries such as cuFFT, Thrust, and bucketSelect \cite{alabi2012fast}, while we propose a simple yet efficient packing algorithm to transform sparse gradients into a dense representation. Minimizing the computational cost of the compression is crucial for high-speed networks, such as Infiniband networks, as we analyzed in Figure~\ref{primtives_comp}. 

Specifically, the contributions of this paper are as follows:
\begin{itemize}[itemsep=.5ex,leftmargin=3ex]
  \item a novel FFT-based, tunable gradient sparsification that retains the original gradient distribution.
  \item a novel range-based variable precision floating-point that allocates precision according to the gradient distribution.
  \item \revise{a analytic model to guide people when to enable compression and how to set a compression ratio according to hardware specifications.}
  \item the convergence proof of our methods, and its guidance in selecting a compression ratio $\theta$, to ensure the convergence, or reduce $\theta$ to recover the accuracy. \revise{To the best of our knowledge, this paper is the first one to discuss the relationship between compression ratio and accuracy of neural networks.}
  \item highly optimized system components for a compression framework that achieves high throughput on GPUs and is beneficial even on state-of-the-art Infiniband networks.
\end{itemize}

\section{Background and Motivation} \label{motivation}

\begin{figure}[!t]
    \centering
    \subfloat[][BSP\label{mpi_integration}]{\includegraphics[width=0.45\linewidth]{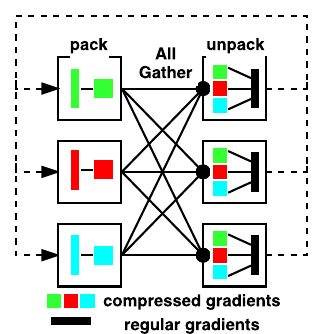}} \quad
    \subfloat[][PS\label{ps_integration}]{\includegraphics[width=0.40\linewidth]{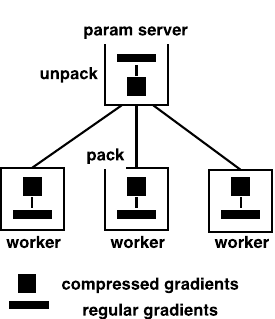} }
    \caption{\textbf{Two parallelization schemes of distributed DNN training}:(a) Bulk Synchronous Parallel (BSP) strictly synchronizes gradients with all-to-all group communications, e.g. MPI collectives; (b) Parameter Server (PS) exchanges gradients with point-to-point communications, e.g. push/pull.}
	\label{parallelization_scheme}
	\vspace{-0.4cm}
\end{figure}


In general, there are two strategies to parallelize DNN training: \textit{Model Parallelism} and \textit{Data Parallelism}. \textit{Model Parallelism} splits a network into several parts, with each being assigned to a computing node~\cite{dean2012large}. It demands extensive intra-DNN communications in addition to gradient exchanges. It largely restricts the training performance, and thereby \textit{Model Parallelism} is often applied in scenarios where the DNN cannot fit onto a computing node~\cite{dean2012large}. The second approach, \textit{Data Parallelism}~\cite{wang2017accelerating}, partitions the image batch, and every computing node holds a replica of the network. In a training iteration, a node computes a sub-gradient with a batch partition. Then, nodes \emph{all-reduce} sub-gradients to reconstruct the global one. The only communications are for necessary gradient exchanges. Therefore, current Deep Learning (DL) frameworks such as SuperNeurons~\cite{Wang:2018:SDG:3178487.3178491}, MXNet~\cite{chen2015mxnet}, Caffe~\cite{jia2014caffe}, and TensorFlow~\cite{abadi2016tensorflow} parallelize the training with \textit{Data Parallelism} for the high-performance.

There are two common strategies to organize the communications with data parallelism: with a centralized Parameter Server (PS) (Figure~\ref{ps_integration}), or with all-to-all group communications, e.g., \allreduce (Figure~\ref{mpi_integration}). TensorFlow~\cite{abadi2016tensorflow}, MXNet~\cite{chen2015mxnet}, and PaddlePaddle implement distributed DNN training with a Parameter Server (PS)~\cite{li2014scaling}.
In this distributed framework, the parameter server centralizes the parameter updates, while workers focus on computing gradients. Each worker pushes newly computed gradients to the parameter server, and the parameter server updates parameters before sending the latest parameters back to workers. Though this client-server~\cite{berson1992client} style design easily supports fault tolerance and elastic scalability, the major downside is the network congestion on the server. Alternatively, \allreduce-based Bulk Synchronous Parallel SGD can better exploit the bandwidth of a high-speed, dense interconnects, such as modern Infiniband networks. Instead of using a star topology, \allreduce pipelines the message exchanges at a fine granularity with adjacent neighbors in a ring-based topology. Since the pipeline fully utilizes the inbound and outbound link of every computing node, it maximizes network bandwidth utilization and achieves appealing scalability where the cost is largely independent of the number of computing nodes.

There are trade-offs between the BSP and PS schemes, with PS having better fault tolerance, and \allreduce better exploits the network bandwidth.
However, as we argue below, in both cases, the communication cost is high, and reducing it can yield substantial gains in training latency.

\subsection{Communication Challenges in Distributed Training of DNNs}

Communications for averaging sub-gradients is widely recognized as a major bottleneck in scaling DNN training\cite{zhao2017efficient, dean2012large, wang2017accelerating}. With increasing data complexity and volume, and with emerging non-linear neural architectures, two critical issues exacerbate the impact of communications in the scalability and efficiency of distributed DNN training with data parallelism: I) \textit{the increasing amounts of data to be exchanged}, and II) \textit{the decreasing opportunity to overlap computation and communication}.

\begin{figure}[!t]
\centering
        \subfloat[][AlexNet\label{AlexNet_layer_comm_comp}]{\includegraphics[height=2.4cm]{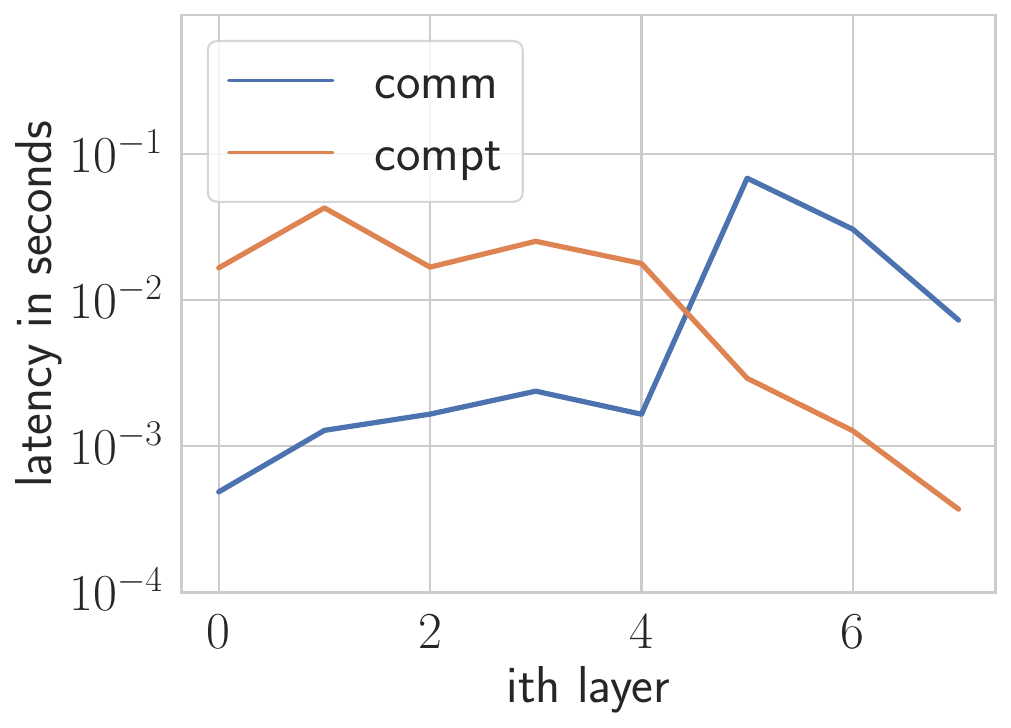}} \quad
        \subfloat[ResNet32\label{ResNet_layer_comm_comp}]{\includegraphics[height=2.4cm]{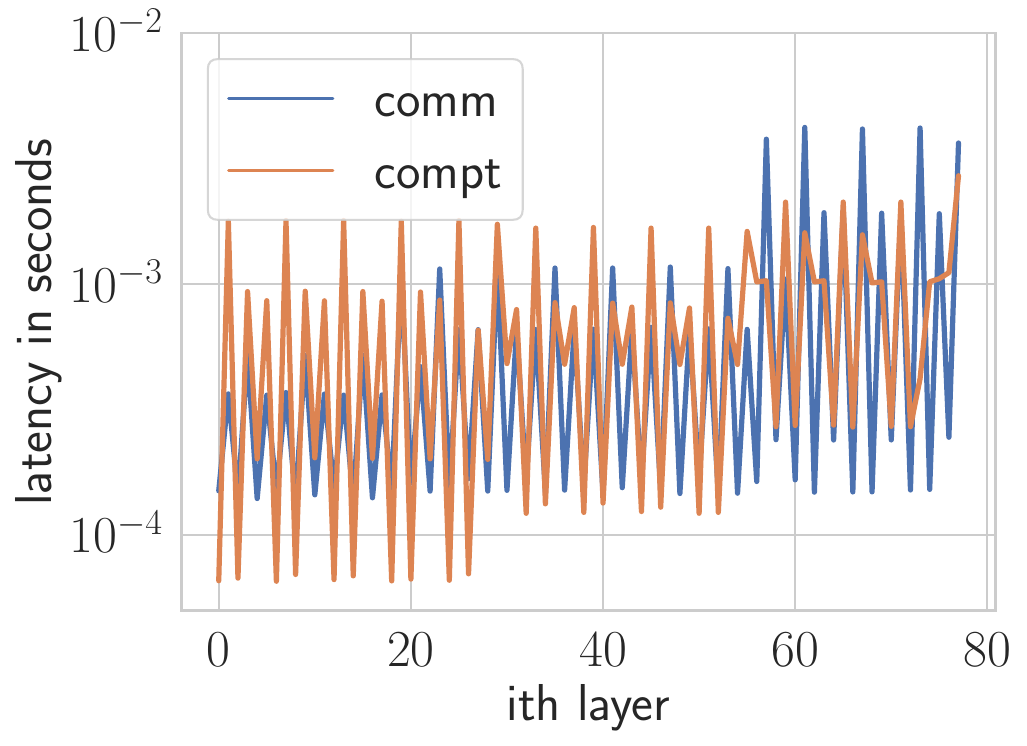}} 
\caption{\textbf{layer-wise communications (all-reduce) v.s. computations} in an iteration of BSP SGD using 16 P100 (4 GPUs/node with 56Gbps FDR). }
\vspace{-0.5cm}
\end{figure}

\noindent\textbf{Challenge I: \revise{Enormous amounts of communications} during training.}
DNNs are extremely effective at modeling complex nonlinearities thanks to the representation power of millions of parameters. The number of parameters dictates the size of the gradients. Specifically, the gradient sizes of AlexNet, VGG16, ResNet32, and Inception-V4 are 250MB, 553MB, 102MB, and 170MB. Even with the highly optimized allreduce implementation on a 56 Gbps FDR network, communication overhead remains significant. For example, the communication for AlexNet, VGG16, Inception-V4 and ResNet32 at regular single-GPU batch sizes\footnote{the single GPU batch size for AlexNet is 64, and 16 for others.} consumes $64.17\%, \linebreak
18.62\%, 33.07\%$ and $43.96\%$ of an iteration time, respectively.

\noindent\textbf{Challenge II: Decreasing opportunity to overlap computation and communication.}
One promising solution to alleviate the communication overhead is hiding the communication for averaging the gradient of the $i^{th}$ layer by the computation of $i-1^{th}$ layer in the backward pass. This lossless technique has proven to be effective on linear networks such as AlexNet and VGG16~\cite{awan2017s, rhu2016vdnn}, as these networks utilize large convolution kernels to process input data. Figure~\ref{AlexNet_layer_comm_comp} demonstrates the computation time of the convolution layers is $10\times$ larger than the communication time, easy for overlapping. 
\revise{
However, the overlapping technique is not always applicable for two reasons. 
$First$, the degree of overlapping is largely decided by the computation pattern of the neural network model. The opportunity for computation and communication overlap is very limited in recent neural architectures, such as Inception-V4~\cite{szegedy2017inception} and ResNet~\cite{he2016deep}.
The sparse fan-out connections in the Inception Unit (Figure 1a in \cite{Wang:2018:SDG:3178487.3178491}) replace one large convolution (e.g. 11$\times$11 convolution kernel in AlexNet) with several small convolutions (e.g. 3$\times$3 convolution kernels). Similarly, ResNet utilizes either 1$\times$1 or 3$\times$3 small convolution kernels. As a result, the layer-wise computational cost of ResNet is similar to or smaller than communication (Figure~\ref{ResNet_layer_comm_comp}); hence, it is much harder to overlap these neural networks than AlexNet.
$Second$, the degree of overlapping is also impacted by the bandwidth of networks. With slower networks, there are less opportunity to overlap communications and
computations. Specifically, as seen
in Figure~\ref{AlexNet_layer_comm_comp}, the computation cost of convolution layers of the AlexNet is $10\times$ larger than the communication cost with 56Gbps InfiniBand. 
However, when training AlexNet in a low profile network such as 1Gbps Ethernet, it becomes impossible to hide the communication cost as it is significantly larger than the computation cost. 
}

\begin{figure*}[t]
    \centering
        \includegraphics[width=0.95\linewidth]{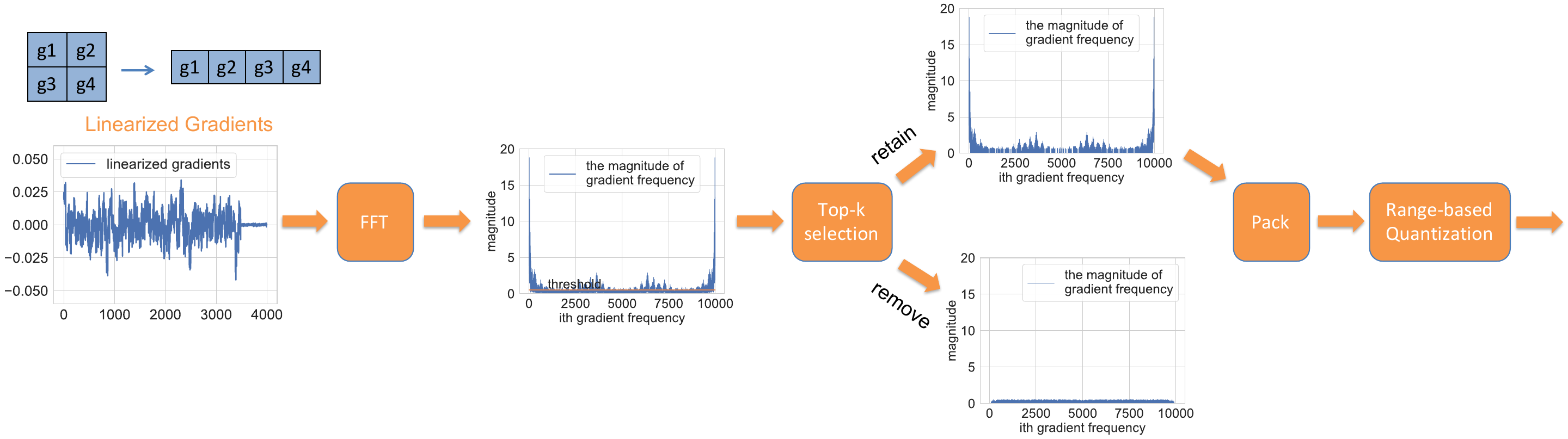}
        \caption{The gradient compression framework (sender).}
        \label{compression_pipeline}
        \vspace{-0.5cm}
        \hspace{-2cm}
\end{figure*}

These two challenges -- increasing data exchanged, and decreasing opportunity to hide communication latency -- make it attractive to look for solutions that minimize the communication cost by decreasing the communication volume.
Training a neural network with imprecise gradient updates still works as parameters are iteratively refined \cite{aji2017sparse}. Particularly, lossy gradient compression can achieve higher compression rates and still allow the network to deliver target accuracy \cite{alistarh2017qsgd}.
Given this, it is not surprising that several gradient compression approaches have been proposed in the literature. They generally fall into two categories: quantization of the gradients (\eg~\cite{seide20141, wen2017terngrad, alistarh2017qsgd, de2015taming}), where these are represented with lower precision numbers, and sparsification (\eg~\cite{aji2017sparse, alistarh2018convergence, wangni2018gradient}), where small gradient are treated as zero and not transmitted. We discuss these approaches in detail in Section~\ref{related_work}. As we describe next, we propose a novel gradient compression scheme that uses adaptive quantization and tunable FFT-based gradient compression that, together, achieve variable compression ratios that can maintain convergence quality, and, critically, is cheap enough computationally to be beneficial.

\section{Methodology}
\label{methodologies}

\subsection{The Compression Framework}
Figure~\ref{compression_pipeline} provides a step-by-step illustration of our compression pipeline. 
\begin{itemize}
  \item[1] Linearize the gradients by re-arranging gradient tensors into a 1-d vector for Fast Fourier Transform (\textbf{FFT}), which is discussed in Section~\ref{sec:stepfft}.
  \item[2] Truncate the gradient frequencies based on their magnitudes to sift out the \textbf{top-k} low-energy frequency components, which is discussed in Section~\ref{sec:stepfft}.
  \item[3] Transform the frequencies' representation from 32-bit float to a new, \textbf{range-based}, $N$-bit float ($N < 32$) to further compress down the gradient frequency,
  which is discussed in Section~\ref{range_base_8bit}.
  \item[4] \textbf{Pack} sparse data into dense vector and transfer them out, which is discussed in Section~\ref{sec:stepfft}.
\end{itemize}
On the receiver side, a similar approach (but using the inverse operations in the reverse order) is used to decompress the gradient frequency vector into gradients. Detailed discussions of compression components and their motivations are as follows. 

\begin{figure}[!t]
    \centering
    \subfloat[][CIFAR-10,  ResNet]{\includegraphics[height=2.5cm]{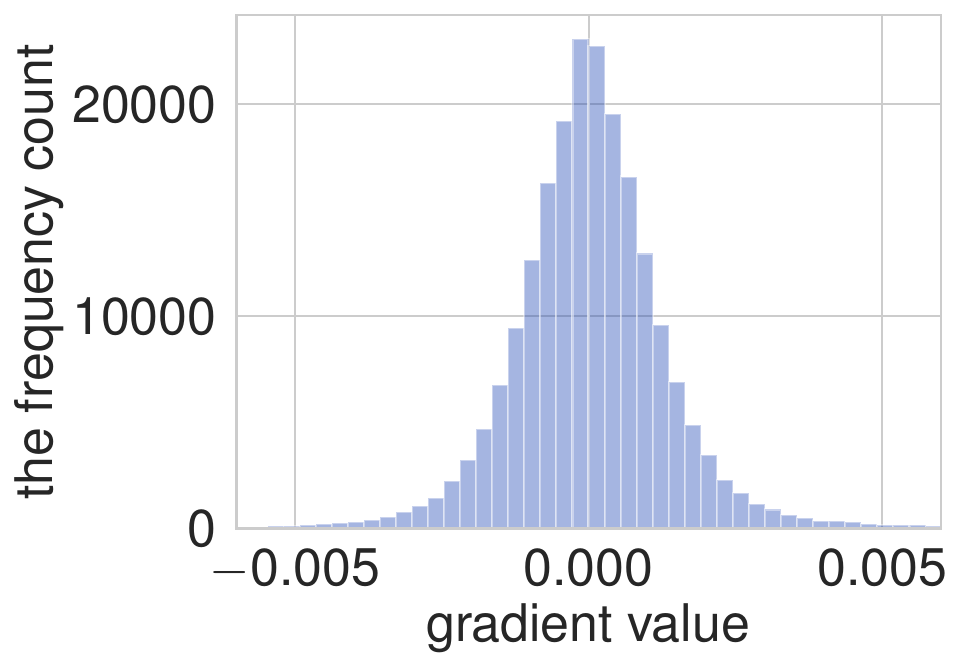}} \quad
    \subfloat[ImageNet, AlexNet]{\includegraphics[height=2.5cm]{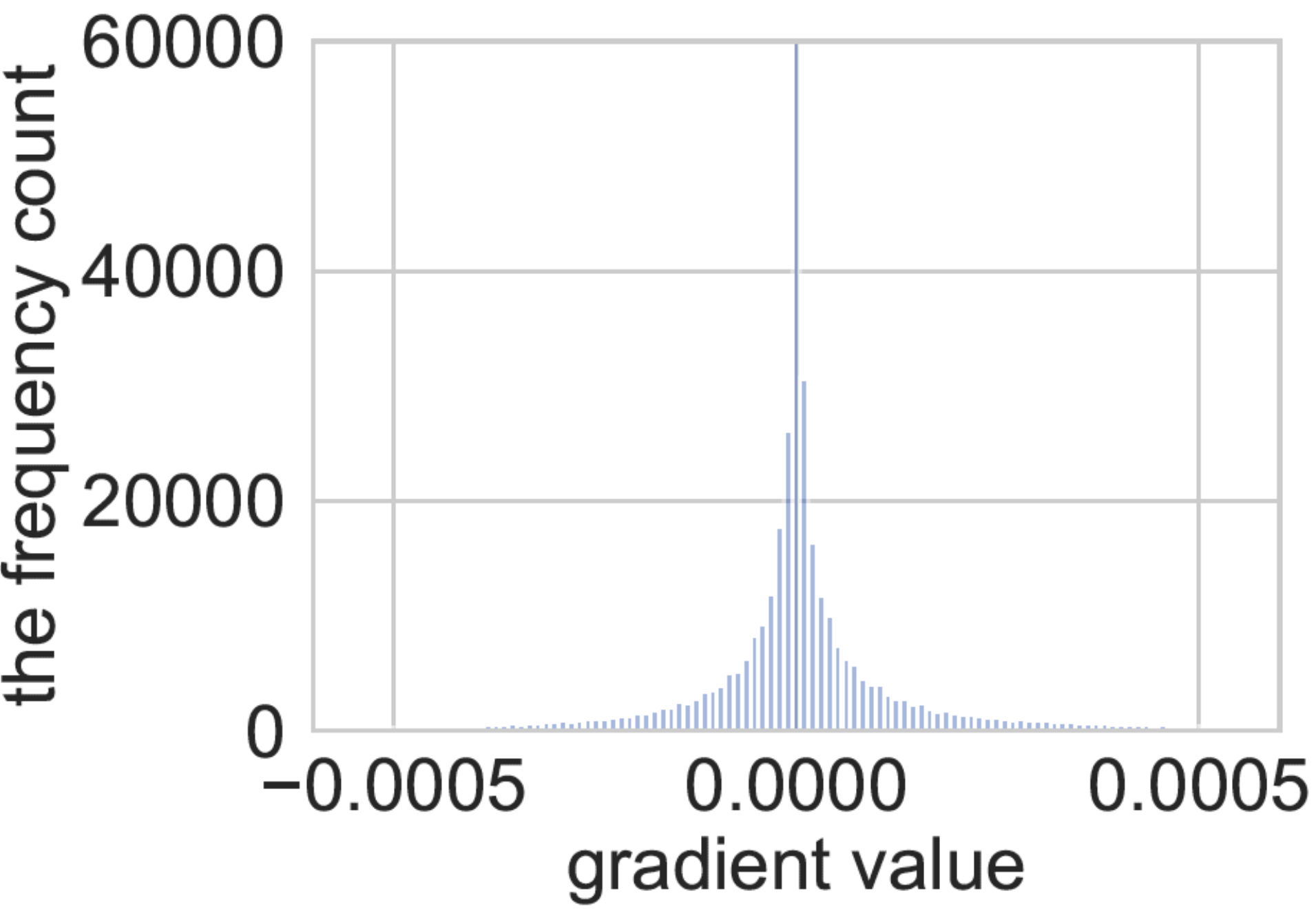}}
\caption{\textbf{Histogram of DNN gradients}: we sampled gradients every $10^3$ and $10^4$ iterations in a full training.}
\label{grad_dist}
\vspace{-0.5cm}
\end{figure}

\begin{figure}[!t]
    \centering
    \subfloat[][FFT Top-k\label{ftt_comp_signal}]{\includegraphics[height=2.5cm]{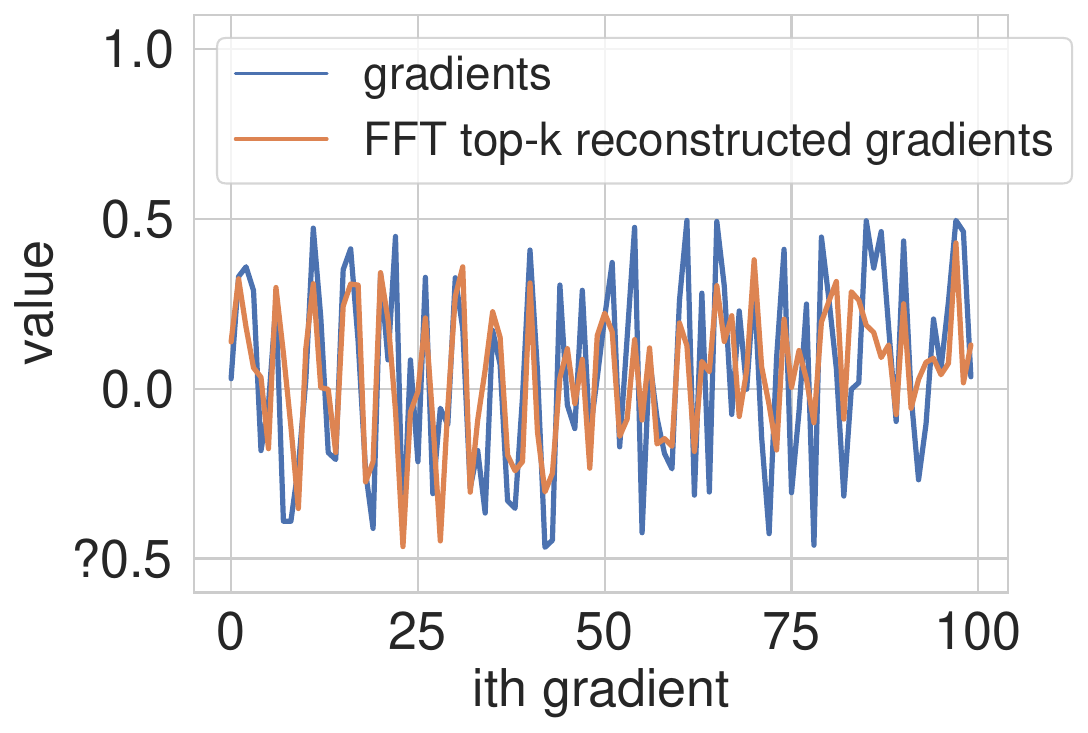}} \quad
    \subfloat[Top-k\label{time_comm_signal}]{\includegraphics[height=2.5cm]{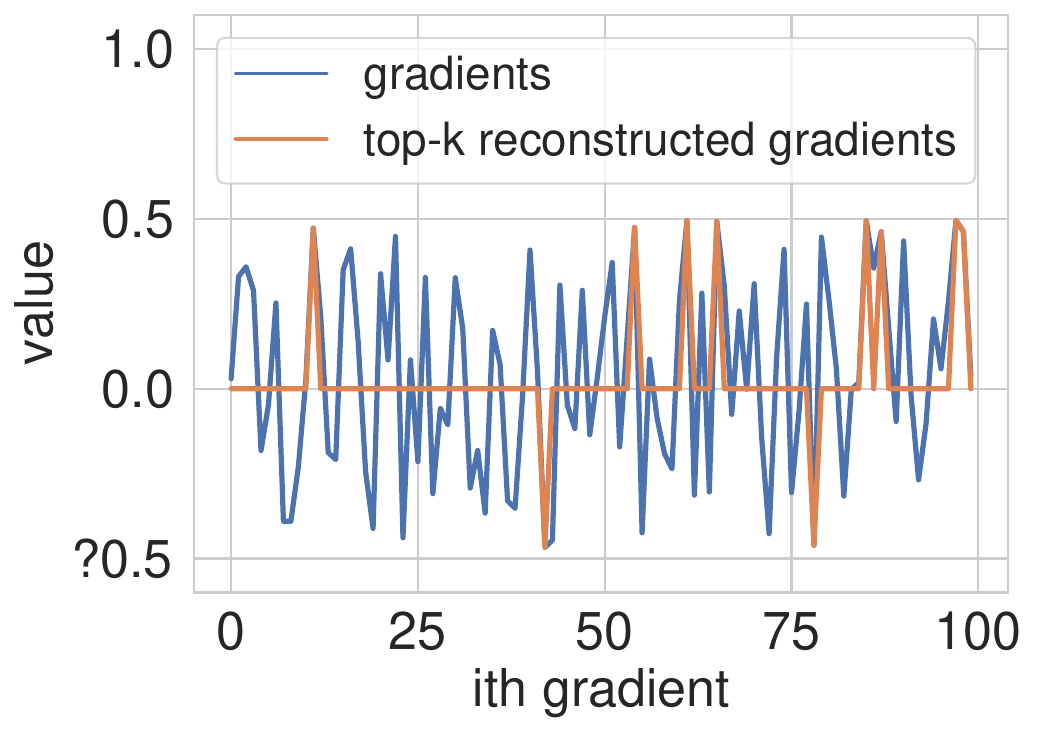}}
    \caption{ \textbf{FFT Top-k v.s. direct Top-k sparsificaiton}: Top-k aggressively loses gradients (err=0.0246), while FFT preserves more relevant information (err=0.0209) at the same sparsification ratio.}
    \label{FFT_Time_Sparsification}
    \vspace{-0.5cm}
\end{figure}

\subsubsection{Removing redundant information with FFT based Top-K sparsification} \label{sec:stepfft}
\textbf{Motivation}: the gradient points to a descent direction in the high dimensional space, thereby small perturbations on gradients can be viewed as introducing local deviations along the descent direction. If such deviations are limited during the training, these imprecise descent directions still iteratively lead to a local optimum at the cost of additional iterations. This is the intuition for the gradient sparsification. Besides, Figure~\ref{grad_dist} indicates high redundancy in DNN gradients due to a lot of near-zero components, that may have limited contributions in updating gradients. Recently, several top-k based methods~\cite{han2015deep,aji2017sparse,alistarh2018convergence} have also shown the possibility to train DNNs with only the top 10\% largest gradients. However, the resulting gradients, as shown in Figure~\ref{FFT_Time_Sparsification}, significantly deviate from the original, for entirely dropping the gradients below the threshold. This has motivated us to sparsify gradients, instead, in the frequency domain for preserving the trend of the original signal even after removing the same amount of information. For a gradient vector of length N, each gradients is $g_i = \sum_{n=0}^{N-1} x_ne^{\frac{-i2\pi k n}{N}}$ after FFT. If we sparsify on $x_n$, i.e. $g_i = \sum_{n}^{top_k} x_ne^{\frac{-i2\pi k n}{N}}$, $g_i$ still preserves some of the original gradient information. Therefore, FFT based top-k shows better results than top-k in Figure~\ref{FFT_Time_Sparsification}. More validations are available in the experimental section.

\textbf{Our approach}: The detailed computation steps of our FFT sparsification are highlighted in Figure~\ref{compression_pipeline}. Recent generations of NVIDIA GPUs support mixed-precision; and computing with half-precision increases the FFT throughput up to 2$\times$. So, we convert 32-bit (full-precision) gradients into 16-bit (half-precision) gradients to improve the throughput before applying FFT, and the information loss from the conversion is negligible due to the bounded gradients. 

After FFT, the next step is to filter the low energy gradient in the frequency domain. We introduce a new hyper-parameter, $\theta$, to regulate the sparsity of frequencies. Here, we only describe the procedures, and the tuning of $\theta$ is thoroughly discussed in Section~\ref{diminishing_lr_theorem} and experiments. If $\theta = 0.9$, we keep the top 10\% frequency components in magnitude and drop the rest by resetting to zeros (Figure~\ref{compression_pipeline}). The selection is implemented with either sorting or Top-k. Since Thrust\footnote{\url{https://developer.nvidia.com/thrust}} and cuFFT\footnote{\url{https://developer.nvidia.com/cufft}} provide highly optimized FFT and sorting kernels for the GPU architecture, we adopted them in our implementations.

\subsection{Packing sparse data into a dense vector}
Thresholding gradient frequencies in the last step yields a highly irregular sparse vector, and we need to pack it into a dense vector to reduce communications. The speed of packing a sparse vector is critical to the practical performance gain. Here, we propose a simple parallel packing algorithm:
\begin{itemize}
    \item[1] Create a $status$ vector and mark an element in $status$ as 1 if the corresponding scalar in $sparse$ vector is non-zero (e.g., $sparse = [a,0,b,0,c,0,0]$ and $status = [1, 0, 1, 0, 1, 0, 0]$).
    \item[2] Perform a parallel prefix-sum on $status$ to generate a $location$ vector ($[1, 1, 2, 2, 3, 3, 3]$).
    \item[3] if $status[i] == 1$, write $sparse[i]$ to $dense[location[i]]$, and $dense$ vector is the packed result.
\end{itemize}
This parallel algorithm has a 689$\times$ speedup over the single-threaded algorithm on a TESLA V100 with a throughput of 34 GB/s.

\begin{figure}[!t]
\centering
\includegraphics[width=0.75\linewidth]{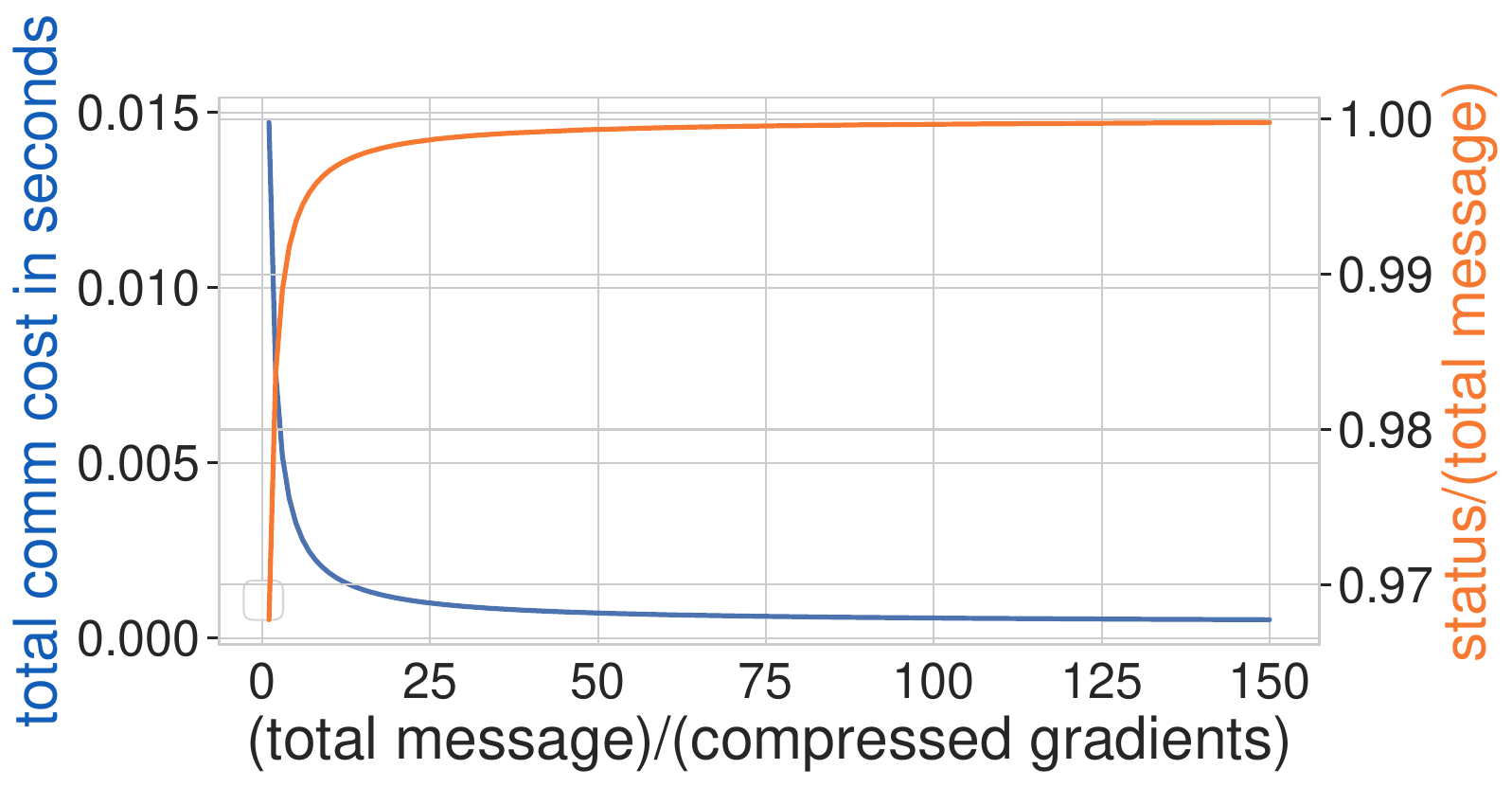}
\caption{\textbf{the effect of status vector}: given 100 MB gradients, the improvement after dropping $> 95\%$ gradients ($\theta=0.05$, compression ration is 20) is limited. }
\label{status_info}
\end{figure}

We need to send the status vector and the compressed gradient to perform the decompression. The status vector is a bitmap that tracks the location of non-zero elements, and its length in bits is the same as the gradient vector. Figure~\ref{status_info} shows the cost of the status vector is non-negligible after the compression ratio exceeding 20. Therefore, setting $\theta < 0.05$ is not desired.

\subsubsection{Range based Quantization} \label{range_base_8bit}

\begin{figure}[!t]
    \centering
	\includegraphics[height=3cm]{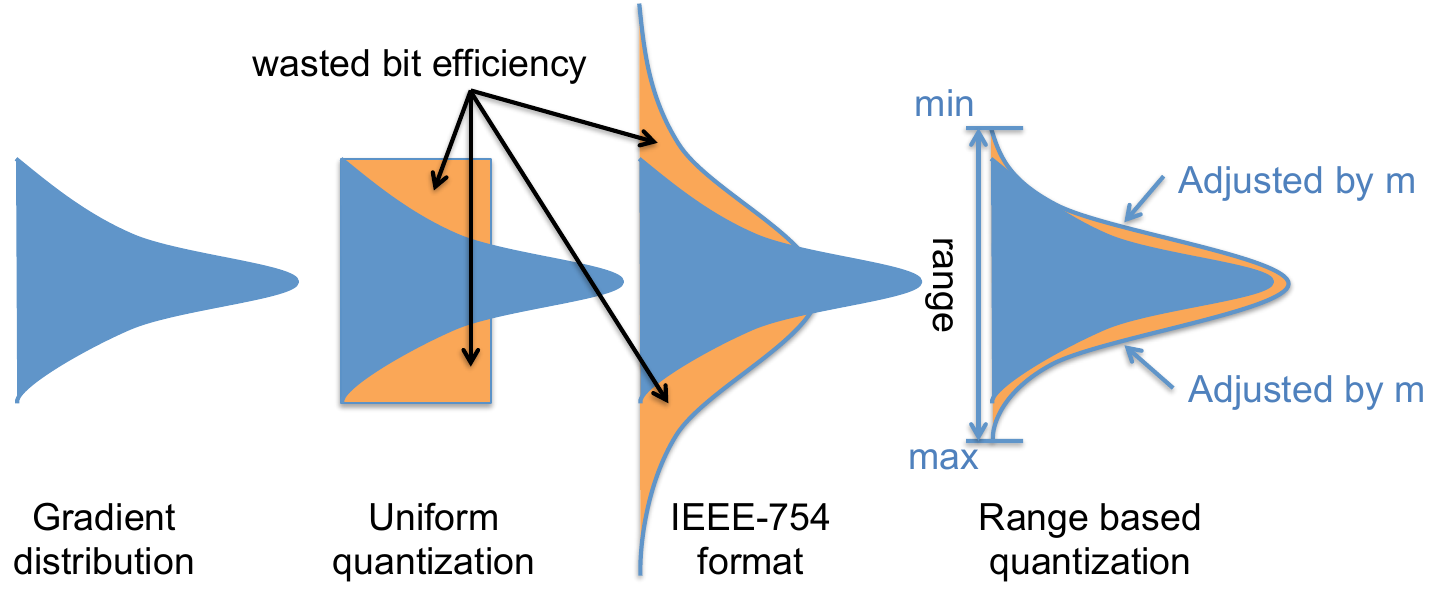} 
    \caption{ \textbf{comparisons of quantization schemes}: the uniform distribution and IEEE 754 format.}
	\label{quantization_scheme}
	\vspace{-0.5cm}
\end{figure}

\textbf{Motivation}: the range of single precision IEEE-754 floating point is $[-3.4*10^{38}, +3.4*10^{38}]$, while the range of gradients and their frequencies are much smaller (e.g. [-1, +1]). This motivates us to represent the bounded gradients with fewer bits.  The problem of using an N bits IEEE 754 format, as seen in Figure~\ref{quantization_scheme}, is the inconsistency between the range of gradients [$min$, $max$] and the range of the IEEE representable numbers. Given N bits for IEEE 754, there are $N-2$ combinations of exponent-mantissa. The representation range is either too large or too small for gradients, regardless of which combinations to choose. Another conventional way is to equally divide the $max-min$ into $2^N$, i.e., uniform quantization. Still, the actual gradient distribution is far from the uniform, and thereby it is also inefficient, as shown in Figure~\ref{quantization_scheme}.

\textbf{Our approach}: we propose an offset-based N-bit floating point, which intends to match the distribution of representable numbers to the real gradients. 
Our representation is to use the N-bit binary format of a positive number as base number $pbase$, and encode it to 0...01. The rest positive numbers are encoded as 0...01 ($pbase$) + offset. The negative numbers also follow the same rule. Therefore, the total $2^N$ representable numbers consist of $P$ positive numbers and $2^N-P$ negative numbers. To match the range of real gradients, our quantization permits the manual setting of a representation range, defined by $min$ and $max$. We estimate $min$ and $max$ from the first few iterations of gradients. Then, we tune $m$ and $eps$ to adjust the precision of representable numbers, as shown in Figure~\ref{quantization_scheme}. $m$ represents the number of bits left for the mantissa, and $eps$ represents the minimal representable positive number whose corresponding N-bit binary is $pbase$. The following further explains how $m$ and $eps$ adjust the precision:

\begin{itemize}
\item $m$: let's denote the difference between two consecutive numbers as $diff$. For $m$ bits mantissa, the exponent increases by 1 after $2^m$ number, and increasing $diff = diff*2$. Since $diff$ is exponentially growing, this creates a Gaussian like representation range that matches to real gradients. If $max$, $min$ and $eps$ are fixed, $P$ is small for a small $m$, as it takes fewer numbers to increase the exponent. Similarly, a large $m$ leads to a larger $P$. Therefore, $m$ is very sensitive for precision.
\item $eps$: with $max$, $min$ and $m$, $diff$ is also fixed. If $eps$ is small, it takes more steps to reach $max$ yielding a large $P$; and vice versa. 
\end{itemize}

Since $m$ and $eps$ determine $P$, we need to tune them to make $P$ close to $2^N/2$ for balancing the range of positive and negative numbers. In practice, $N$, $min$, and $max$ are empirically decided from gradients, and the $m \in [1, N]$. We iterate every $m$ to tune for eps. Given $N$, $m$, $min$, and $max$, we initialize $eps$ as a reasonably small number, e.g., 0.002, then de-compress the 1..1 (the minimal representable negative number) back to FP32 with the selected $eps$,
and the resulting number is the current actual minimal negative number $actual\_min$;
if $actual\_min$ is smaller than $min$, we decrease $eps$, and increase otherwise. Following this path, $P$ converges to $2^N/2$, a state with equal positive and negative numbers, and yielding the optimal $eps$.

\begin{figure}[!t]
    \centering
    \includegraphics[width=0.85\columnwidth]{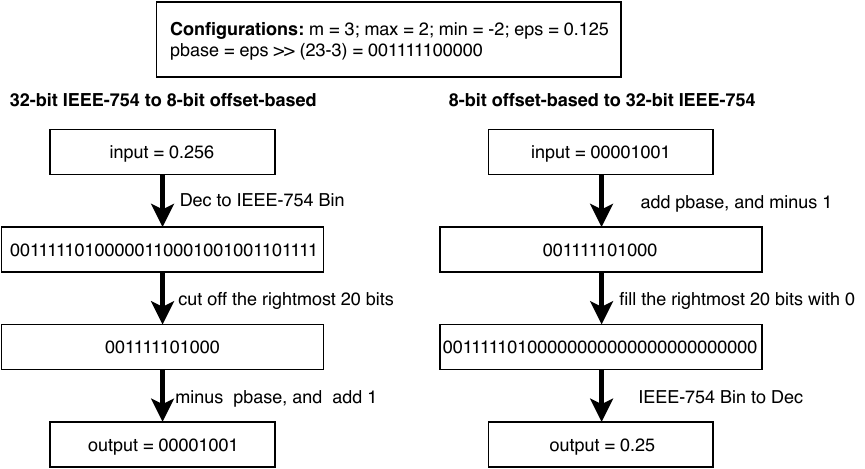}
    \caption{\textbf{\textit{Illustration of range based quantizer}}: an example conversion of between 32 bits IEEE 754 and 8 bits our representation. }
    \label{8bit_example}
    \vspace{-0.5cm}
\end{figure}

\begin{algorithm2e}[h]
\scriptsize
    \KwIn{init(min, max)} 
    pbase\_binary = eps >> (23-m) \; 
    \KwIn{32bit\_to\_Nbit(32bit\_float)}
    \If{32bit\_float > max} {
        32bit\_float = max;
    }
    32bit\_binary = 32bit\_float >> (23-m) \;
    Nbit\_binary = 32bit\_binary - pbase\_binary + 1 \;
    \KwIn{Nbit\_to\_32bit(Nbit\_binary)}
    32bit\_binary = Nbit\_binary + pbase\_binary - 1 \;
    32bit\_float = 32bit\_binary << (23-m) \;
    \caption{Offset-based $N$-bit floating point\label{offset_based}}
\end{algorithm2e}

Alg.~\ref{offset_based} summarizes the conversion from 32-bit IEEE 754 to our N-bit offset based float, and N is set w.r.t the precision requirement for the training. Figure~\ref{8bit_example} provides a step-by-step conversion between IEEE 754 and our 8 bits representation.

\begin{figure}[!t]
    \centering
    \subfloat[][(-0.5, 0.5)]{\includegraphics[height=2.5cm]{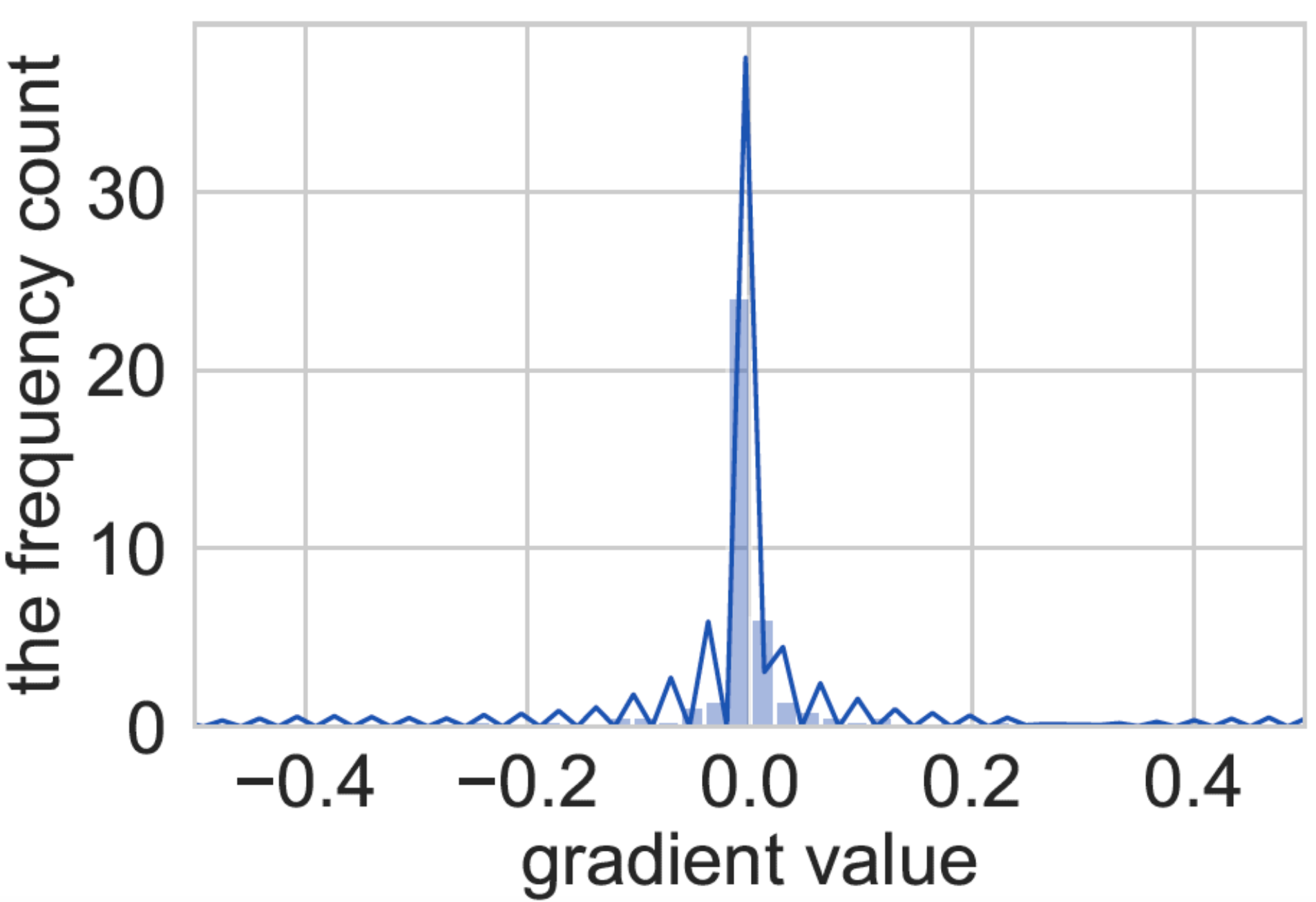}} \quad
    \subfloat[(-5, 5)]{\includegraphics[height=2.5cm]{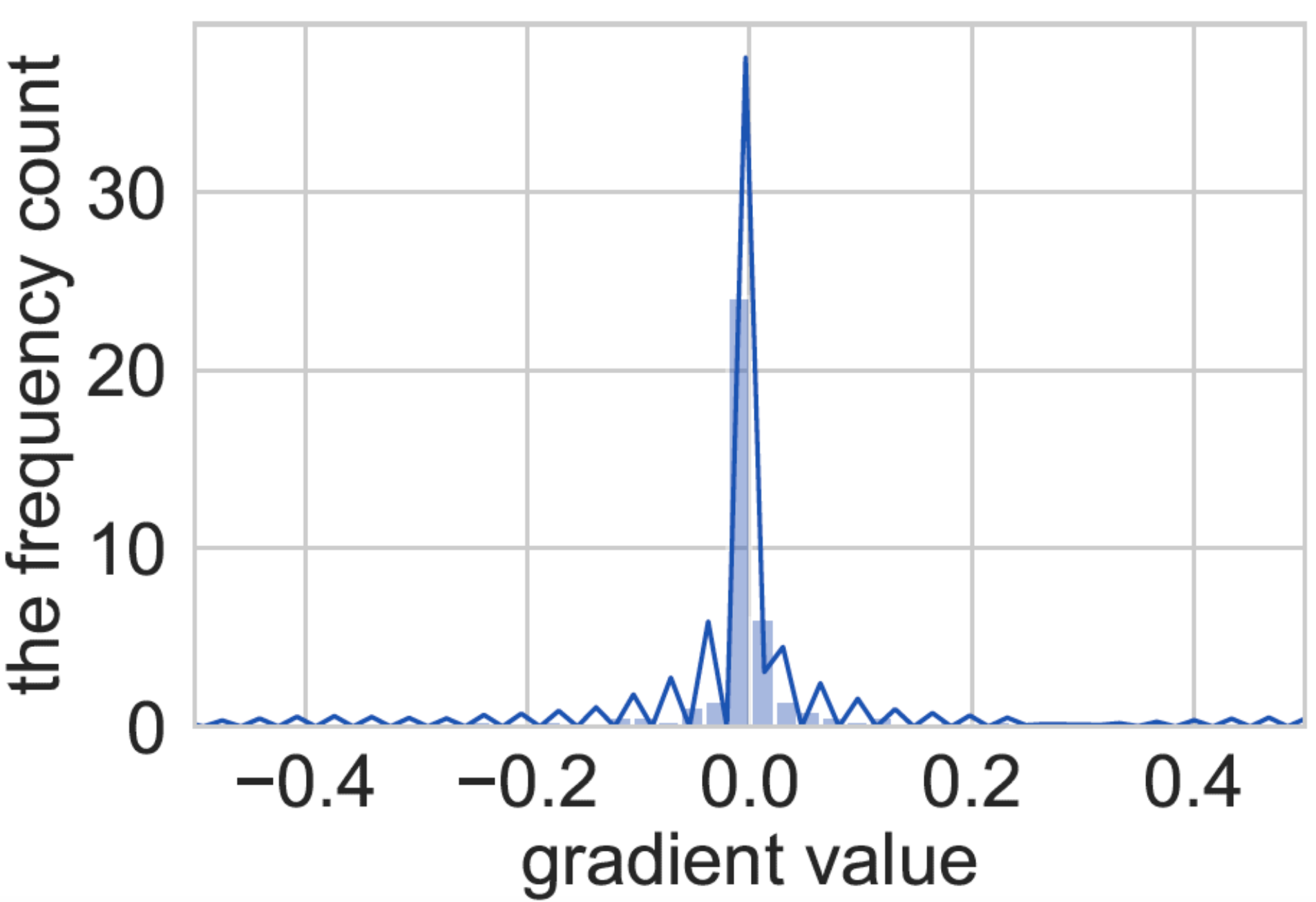}}
\caption{\textbf{Adjustable representation range}: our quantization successfully adjusts its distribution.}
\label{adjustable_representation_range}
\vspace{-0.7cm}
\end{figure}

Figure~\ref{adjustable_representation_range} shows the resulting number distributions of our approach when the range is set to [-0.5, 0.5], and [-5, 5]. This shows our approach successfully adjusts representation ranges, while still maintaining similar distribution to actual gradients. This is because $diff$ increases 2x after $2^m$ numbers, leading to more numbers around 0, and less to $max$ or $min$. Unlike prior static approach, our offset based float dynamically changes the representable range to sustain the various precision requirements from different training tasks. Besides, the float quantizations are embarrassingly data-parallel, so it is easy to achieve the high-performance.

\begin{figure}[!t]
    \centering
    \subfloat[][Top-k selection\label{select}]{\includegraphics[width=0.45\linewidth]{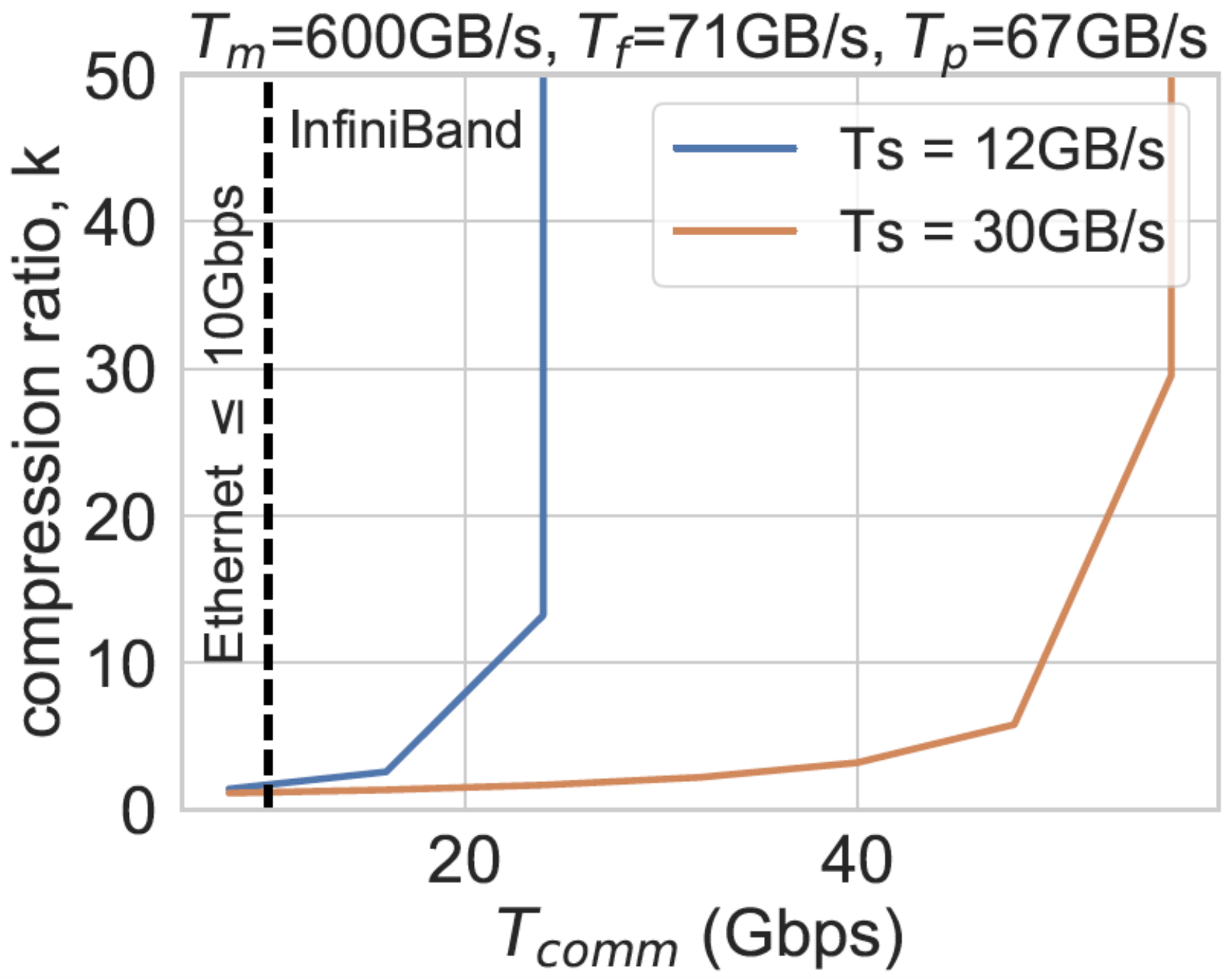}} \quad
	\subfloat[][Packing\label{packing}]{\includegraphics[width=0.45\linewidth]{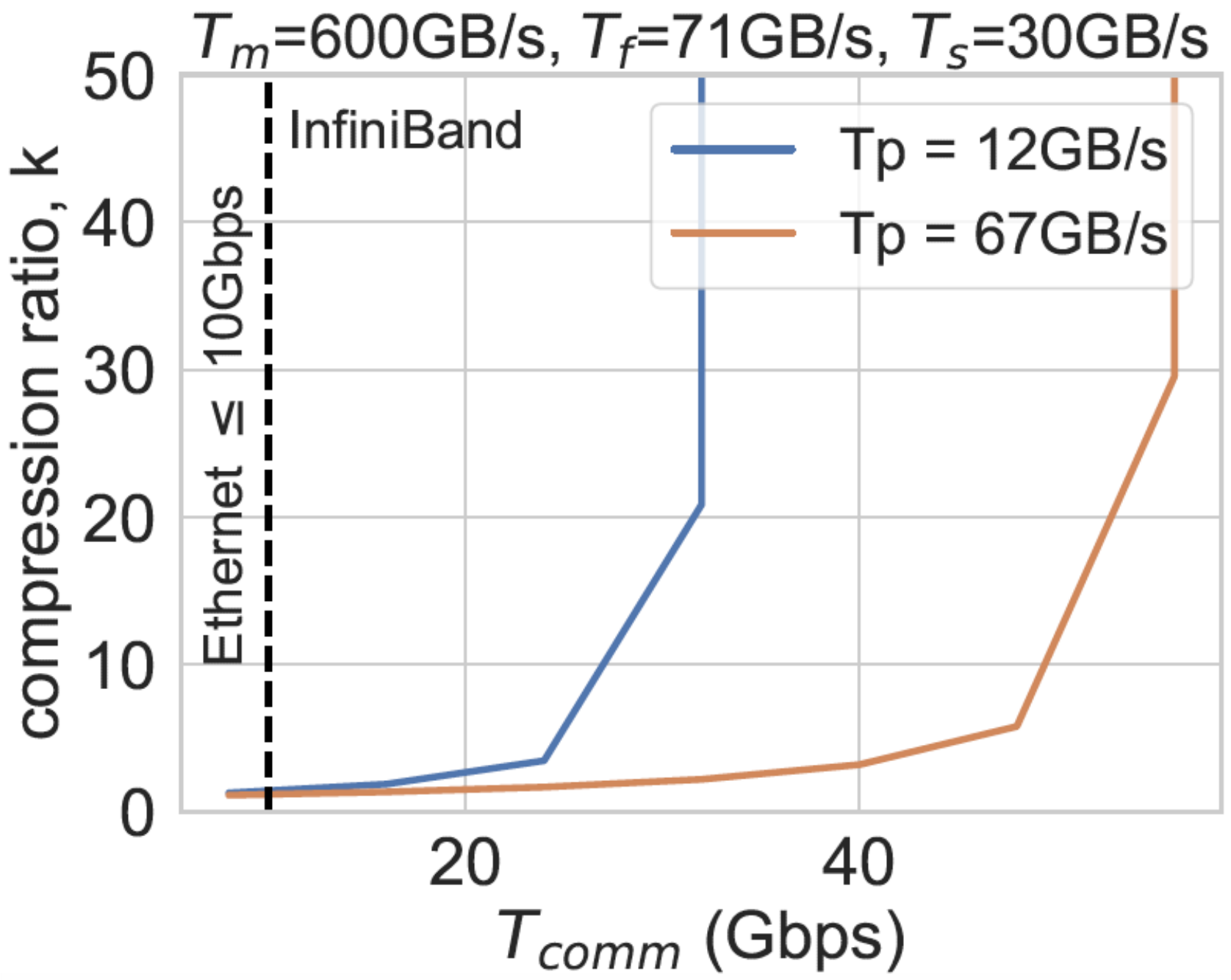}}
    \caption{\textbf{Minimal compression ratio k exhibits performance benefits} at different network bandwidths $T_{comm}$, packing throughput $T_p$ and selection throughput $T_s$. It is easy to get performance improvement from a slow network, while it requires faster compression primitives to be beneficial on a fast network. }
    \label{primtives_comp}
\end{figure}

\subsection{Sensitivity Analysis}
\label{sensitivity_analysis}

\begin{table}[t]
\vspace{0.5cm}
\small
\begin{tabular}{l | l}
\hline
Symbol & Explanation\\ \hline
$T_{m}$ & Maximum throughput of precision conversion including \\
        & float-to-half and range-based quantization\\ \hline
$T_{f}$ & Maximum throughput of FFT \\ \hline
$T_{p}$ & Maximum throughput of packing\\ \hline
$T_{s}$ & Maximum throughput of top-k selection\\ \hline
$T_{comm}$ & Maximum throughput of communication via networks\\ \hline
$k$ & Overall compression ratio \\ \hline
\end{tabular}
\caption{\revise{Symbols of equations in Section~\ref{sensitivity_analysis}.}\label{tab:cost}}
\end{table}

The compression cost shall not offset the compression benefit to
acquire practical performance gain. In this section, we analyze
the performance of compression primitives and their impact on
perceived network bandwidth.
\revise{
Table~\ref{tab:cost} defines all symbols used in the analysis.
It is noted that we use the same notation $T_{m}$ for both float-to-half and
range-based quantization as they are O(N)
algorithms and embarrassingly parallel.
}
Given a message of size $M$, the cost of compression is:
\be
cost_{comp} = M(\frac{2}{T_m}+\frac{1}{T_f}+\frac{1}{T_p}+\frac{1}{T_s})
\label{comp_cost_comp}
\ee
The communication cost after compression is :
\be
cost_{comm} = \frac{M}{T_{comm}}(\frac{1}{k})
\label{saved_comm_cost}
\ee
So the communication cost saved by compression is:
\be
saved\_cost_{comm} = \frac{M}{T_{comm}}(1-\frac{1}{k})
\ee
To compensate for the cost of compression and decompression, \linebreak
$
2cost_{comp} < saved\_cost_{comm}
\label{beneficial_inequality}
$ must hold to acquire the practical performance gain, that is
\be
k > \frac{1}{1 - 2T_{comm}(\frac{2}{T_m}+\frac{1}{T_f}+\frac{1}{T_p}+\frac{1}{T_s})}
\label{k_ratio}
\ee
The performance of $T_m$ depends on the hardware characteristics (such as GPU DRAM bandwidth), and $T_f$ depends on cuFFT.
\revise{It is therefore reasonable to consider them fixed for a particular GPU hardware. 
$T_s$ and $T_p$ depend on the libraries and algorithms applied.
By varying $T_s$ and $T_p$ in Equation~\ref{k_ratio} we analyze the minimal compression ratio $k$ that will show benefits for a particular network infrastructure. 
Figure~\ref{primtives_comp} shows the relationship between $k$ and $T_{comm}$.
If the network throughput is low, like Ethernet, a small $k$ could compensate for the cost
of compression and decompression, which means increasing $k$ would significantly boost
the performance of communications. For example, Figure~\ref{primtives_comp} shows that
$k=2$ is enough to compensate for the overhead of compression and decompression on a 10Gbps
Ethernet.
One the other hand, if the network throughput is high, like InfiniBand, a larger
$k$ would be necessary; otherwise, the overall performance will be impacted by the
overhead of compression and decompression.  More precisely, the red line in
Figure~\ref{primtives_comp} indicates that the minimal compression ratio $k$
should be about 30 to exhibit any benefit on a 56Gbps InfiniBand.

Figure~\ref{primtives_comp} also predicts that the performance of the compression primitives is crucial for high bandwidth networks. As seen in Figure~\ref{primtives_comp}.a, if $T_s$ is 12GB/s, for any $T_comp$ larger than $22 Gbps$ no compression ratio will be able to provide any tangible communication improvement.}

\subsection{Convergence Analysis}
\label{convergence_analysis}
In order to analyse the convergence of our proposed technique we formulate the DNN training as:
\be
\min_x f(x):=\frac{1}{N}\sum_{i=1}^Nf_i(x),
\ee
where $f_i$ is the loss of one data sample to a network. For non-convex optimization, it is sufficient to prove the convergence by showing $ \| \nabla f(x^t) \|^2 \leq \epsilon$ as $t \rightarrow \infty$, where $\epsilon$ is a small constant and $t$ is the iteration. The condition indicates the function converges to the neighborhood of a stationary point. Before stating the theorem, we need to introduce the notion of Lipschitz continuity. $f(x)$ is smooth and non-convex, and $\nabla f$ are $L$-Lipschitz continuous. Namely, $$\|\nabla f(x) - \nabla f(y)\|\leq L\|x-y\|.$$
 For any $x,y$,
 $$f(y)\leq f(x) + \langle \nabla f(x),y-x\rangle + \frac{L}{2}\|x-y\|^2.$$

 \begin{assumption}
 	\label{assumption:variance}
 	Suppose $j$ is a uniform random sample from $\{1,...,N\}$, then we make the following bounded variance assumption:
 	$$\E[\|\nabla f_j(x) - \nabla f(x)\|^2]\leq \sigma^2, \mbox{ for any } x.$$
 \end{assumption}
 This is a standard assumption widely adopted in the SGD convergence proof \cite{nemirovski2009robust} \cite{ghadimi2013stochastic}. It holds if the gradient is bounded.

 \begin{assumption}
 	\label{assumption:Truncation-err}
In the data-parallel training, the gradient of each iteration is $\bar{v}=\frac{1}{p}\sum^{p}_{1}{v_i}$; $p$ is the number of processes, and $v_i$ is the gradient from the $i^{th}$ process. Let's denote $\theta \in [0,1]$ to control the percentage of information loss in the compression function $\hat{v_{i}}=T(v_i, \theta)$ that does quant(FFT-sparsification($v_i$)), so $\bar{\hat{v}} = \sum_{1}^{p}{\hat{v_i}}$. We assume there exists a $\alpha$ such that:
$$\| \bar{v} - \bar{\hat{v}}\|\leq \alpha\|\bar{v}\|.$$
\end{assumption}
So, $\hat{v}$ only loses a small amount of information with respect to $\bar{v}$, and the update from the sparsified gradient is within a bounded error range of true gradient update. It is a necessary condition for deriving the upper bound.

With our compression techniques, one SGD update becomes:
\be
\label{defn:iterate}
\xte = \xt - \eta_t(\frac{1}{P}\sum_{1}^{p}\hat{v}_{i}) = \xt - \eta_t \bar{\hat{v}}_t.
\ee
Then, we have the following lemma for one step:
\begin{lemma}
	\label{lemma:descent}
	Assume $\eta_t\leq\frac{1}{4L}, \theta_t^2\leq \frac{1}{4}$. Then
	\be
	\label{lm:descent}
	\frac{\eta_t}{4}\E[\|\nabla f(\xt)\|^2] \leq \E[f(\xt)] - \E[f(\xte)] + (L\eta_t + \theta_t^2)\frac{\eta_t\sigma^2}{2b_t}.
	\ee
\end{lemma}
Please check the supplemental material for the proof of this lemma. Summing over \eqref{lm:descent} for $K$ iterations, we get:
\bea
\scriptstyle
\label{key-1}
\sum_{t=0}^{K-1}\eta_t\E[\|\nabla f(\xt)\|^2]\leq 4(f(x^0) - f(x^K)) + \sum_{t = 1}^{K-1}(L\eta_t + \theta_t^2)\frac{2\eta_t\sigma^2}{b_t}.
\eea
Next, we present the convergence theorem.
\begin{theorem}
	\label{fixed_lr_theorem}
	\label{theorem:fix-learning-rate} If we choose a fixed learning rate, $\eta_t = \eta$; a fixed dropout ratio in the sparsification function, $\theta_t = \theta$; and a fixed mini-batch size, $b_t= b$; then the following holds:
	$$\scriptstyle \min_{0\leq t\leq K-1}\E[\|\nabla f(\xt)\|^2] \leq \frac{4(f(x^0) - f(x^{K-1}))}{K} + (L\eta + \theta^2)\frac{2\eta\sigma^2}{b}.$$
\end{theorem}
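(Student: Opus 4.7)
The plan is to derive Theorem~\ref{fixed_lr_theorem} as a direct specialization of Lemma~\ref{lemma:descent} combined with the standard ``minimum-is-at-most-average'' trick. The first step is to observe that the hypotheses of Lemma~\ref{lemma:descent} ($\eta_t\leq 1/(4L)$ and $\theta_t^2\leq 1/4$) hold uniformly in $t$ under the constant-parameter choice $\eta_t=\eta$, $\theta_t=\theta$, $b_t=b$, so the per-iteration descent inequality
$$\frac{\eta}{4}\E[\|\nabla f(\xt)\|^2] \leq \E[f(\xt)] - \E[f(\xte)] + (L\eta + \theta^2)\frac{\eta\sigma^2}{2b}$$
is valid at every iteration $t=0,1,\ldots,K-1$.

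Next I would sum this inequality over $t = 0,\ldots,K-1$. The first term on the right telescopes to $f(x^0) - \E[f(x^K)]$, which matches the objective-gap factor appearing in the theorem statement (up to the index on the final iterate), while the constant variance/compression term simply multiplies by $K$. Multiplying both sides through by $4/\eta$ to normalize the coefficient on the gradient norms then yields
$$\sum_{t=0}^{K-1}\E[\|\nabla f(\xt)\|^2] \leq \frac{4\bigl(f(x^0) - \E[f(x^K)]\bigr)}{\eta} + K(L\eta+\theta^2)\frac{2\sigma^2}{b}.$$

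The final step is to invoke the elementary bound
$$\min_{0\leq t\leq K-1}\E[\|\nabla f(\xt)\|^2] \;\leq\; \frac{1}{K}\sum_{t=0}^{K-1}\E[\|\nabla f(\xt)\|^2]$$
and divide the telescoped inequality by $K$, which gives exactly the form stated in Theorem~\ref{fixed_lr_theorem}, with the objective-gap term decaying like $1/K$ and the residual variance/sparsification error captured by the $(L\eta+\theta^2)\sigma^2/b$ piece.

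The main work has already been absorbed into Lemma~\ref{lemma:descent}, which is where $L$-smoothness, the mini-batch variance from Assumption~\ref{assumption:variance}, and the sparsification error bounds from Assumption~\ref{assumption:Truncation-err} are combined; the hardest step in that deferred proof is controlling the cross term $\langle \nabla f(\xt), \hat{v}_t - v_t\rangle$ arising from replacing $v_t$ with $\hat v_t$, which requires splitting via Young's inequality and using $\|v-\hat v\|\leq \theta\|v\|$ together with $\E\|v_t-\nabla f(\xt)\|^2\leq \sigma^2/b$. Once that one-step inequality is in hand, Theorem~\ref{fixed_lr_theorem} is merely bookkeeping: substitute the constants, telescope, take the min, and divide by $K$.
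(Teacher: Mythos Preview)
Your proposal is correct and follows essentially the same route as the paper: specialize Lemma~\ref{lemma:descent} to constant $\eta,\theta,b$, telescope over $t=0,\ldots,K-1$ (this is exactly the paper's inequality~\eqref{key-1}), then apply $\min\leq$ average and divide by $K\eta$. The paper's proof is just these same two steps compressed into two sentences, and your observation about the mismatch in the final-iterate index ($x^{K-1}$ vs.\ $x^K$ vs.\ $x^*$) is an actual inconsistency in the paper rather than a gap in your argument.
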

\begin{proof}
$\min_{0\leq t\leq K-1}\E[\|\nabla f(\xt)\|^2] \leq \frac{1}{K} \sum_{t=0}^{K-1}\eta_t\E[\|\nabla f(\xt)\|^2]$, as $\|\nabla f(\xt)\|^2 \geq 0$. By \eqref{key-1}, we get the theorem.
\end{proof}

\begin{theorem}
\label{diminishing_lr_theorem}
If we apply the diminishing stepsize, $\eta_t$, satisfying $\sum_{t=0}^{\infty}\eta_t = \infty, \sum_{t = 0}^{\infty}\eta_t^2 <\infty$,
our compression algorithm guarantees convergence with a diminishing drop-out ratio, $\theta_t$, if $\theta_t^2 = L\eta_t$.
\end{theorem}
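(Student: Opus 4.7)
The plan is to derive the statement directly from the summed one-step descent inequality \eqref{key-1} that is already available from Lemma \ref{lemma:descent}. Since the hypotheses $\eta_t\le 1/(4L)$ and $\theta_t^2\le 1/4$ of Lemma \ref{lemma:descent} must be respected, I would first verify that the prescription $\theta_t^2 = L\eta_t$ is compatible with them: because $\sum_t\eta_t^2<\infty$ forces $\eta_t\to 0$, we have $\eta_t\le 1/(4L)$ for all $t$ large enough (and we may truncate the finitely many initial iterations or rescale $\eta_0$ without affecting the limiting statement), and then $\theta_t^2 = L\eta_t\le 1/4$ automatically. For simplicity I would also take a fixed batch size $b_t\equiv b$; the argument extends to any $b_t$ bounded below.

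Substituting $\theta_t^2 = L\eta_t$ into \eqref{key-1} collapses the error coefficient $L\eta_t+\theta_t^2$ to $2L\eta_t$, giving
\begin{equation*}
\sum_{t=0}^{K-1}\eta_t\,\E[\|\nabla f(x^t)\|^2]\ \le\ 4\bigl(f(x^0)-f(x^*)\bigr) + \frac{4L\sigma^2}{b}\sum_{t=0}^{K-1}\eta_t^2.
\end{equation*}
Letting $K\to\infty$ and using the hypothesis $\sum_t\eta_t^2<\infty$, the right-hand side is bounded by a finite constant $C$ that is independent of $K$. Combined with the other hypothesis $\sum_t\eta_t=\infty$, the standard Robbins--Monro type argument yields
\begin{equation*}
\liminf_{t\to\infty}\E[\|\nabla f(x^t)\|^2]\ =\ 0,
\end{equation*}
because if one had $\E[\|\nabla f(x^t)\|^2]\ge \varepsilon>0$ for all large $t$, then $\sum_t \eta_t \E[\|\nabla f(x^t)\|^2]\ge \varepsilon\sum_t\eta_t=\infty$, contradicting the bound above. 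Equivalently, $\min_{0\le t\le K-1}\E[\|\nabla f(x^t)\|^2]\le C/\sum_{t=0}^{K-1}\eta_t\to 0$, which is the precise sense of ``convergence'' paralleling Theorem \ref{fixed_lr_theorem}.

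I do not anticipate any real obstacle: the choice $\theta_t^2=L\eta_t$ is engineered exactly so that the extra $\theta_t^2\eta_t/b_t$ term in \eqref{key-1} has the same $\eta_t^2$ order as the SGD variance term, and the rest is the textbook diminishing-stepsize argument. The only mildly subtle point is remembering that we only obtain a $\liminf$ (or weighted-min) convergence of the expected squared gradient norm, not almost-sure convergence of every iterate; this is the conventional notion of convergence to a stationary point in non-convex SGD, and it matches the scope of Theorem \ref{fixed_lr_theorem} above.
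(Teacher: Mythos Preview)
Your proposal is correct and follows essentially the same route as the paper: both arguments substitute $\theta_t^2=L\eta_t$ into the summed descent inequality \eqref{key-1}, reduce the error term to a multiple of $\sum_t\eta_t^2<\infty$, and then divide by $\sum_t\eta_t\to\infty$. The only cosmetic difference is that the paper phrases the conclusion via a randomized output $x_{\mathrm{out}}$ drawn from $\{x^0,\dots,x^{K-1}\}$ with probability proportional to $\eta_t$ (so that $\E[\|\nabla f(x_{\mathrm{out}})\|^2]$ equals the $\eta_t$-weighted average and tends to $0$), whereas you state the equivalent $\liminf$/weighted-min version; your explicit check that $\eta_t\le 1/(4L)$ and $\theta_t^2\le 1/4$ eventually hold is a detail the paper leaves implicit.
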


\begin{proof}
If we randomly choose the output, $x_{out}$, from $\{x^0,...,x^{K-1}\}$, with probability $\frac{\eta_t}{\sum_{t=0}^{K-1}\eta_t}$ for $\xt$, then we have:
\bea
\E[\|\nabla f(x_{out})\|^2]  & = \frac{\sum_{t=0}^{K-1}\eta_t\E[\|\nabla f(\xt)\|^2]}{\sum_{t=0}^{K-1}\eta_t} \\
                             & \leq \frac{4(f(x^0)-f(x^*)}{\sum_{t=0}^{K-1}\eta_t} + \frac{\sum_{t=0}^{K-1}(L\eta_t + \theta_t^2)2\eta_t\sigma^2}{b\sum_{t=0}^{K-1}\eta_t}.
\eea

Note that $\sum_{t=0}^{K-1}\eta_t\rightarrow\infty$, while\\
$\sum_{t=0}^{K-1}(L\eta_t + \theta_t^2)2\eta_t\sigma^2 = \sum_{t=0}^{K-1}4L\eta_t^2\sigma^2<\infty$, \\
and we have $\E [\|\nabla f(x_{out})\|^2]\rightarrow0$.
\end{proof}

\section{Evaluation} \label{evaluation}
Our experiments consist of two parts to assess the proposed techniques. First, we validate the convergence theory and its assumptions with AlexNet on ImageNet and ResNet32 on CIFAR10, which sufficiently cover typical workloads in traditional linear and recent non-linear neural architectures, \revise{and also provide coverage on two widely used datasets.}
Then, we show that the FFT-based method demonstrates better convergence and faster compression than other state-of-the-art compression methods such as QSGD \cite{alistarh2017qsgd}, TernGrad \cite{wen2017terngrad}, Top-k sparsification \cite{lin2017deep, alistarh2018convergence}, as our techniques incur fewer approximation errors, while still delivering a competitive compression ratio for using both sparsification and quantization.

%

\textbf{Parallelization scheme}: we choose BSP for parallelization for its simplicity in the theoretical analysis:
BSP follows strict synchronizations, allowing us to better observe the effects of gradient compression toward the convergence by iterations.

\textbf{Implementation}: we implemented our approach, \revise{losses} SGD(\revise{no compression}), QSGD, Top-K, and TernGrad in a C++ DL framework, SuperNeurons \cite{Wang:2018:SDG:3178487.3178491}; We used the allgather collective from NVIDIA NCCL2
to exchange compressed gradients since existing communication libraries lack the support for sparse all-reduce (Figure~\ref{mpi_integration}). 
\revise{Even though SGD usually uses allreduce instead of allgather as it does not have compression; for a fair comparison, we applied allgather for all algorithms
to demonstrate the algorithmic benefit of our FFT compression.}
Every GPU has a copy of global gradients for updating parameters after all-gather local gradients. Parameters need to be synchronized after multiple iterations to eliminate the precision errors, and here we broadcast parameters every 10 iterations.
\revise{It is noticed that we did not adopt communication and computation overlapping strategy as it could be another optimization method orthogonal to compression, and is not in
the scope of this paper.}
\begin{figure}[!t]
    \centering
	\hspace{-0.7cm}
    \subfloat[][AlexNet]{\includegraphics[height=2.8cm]{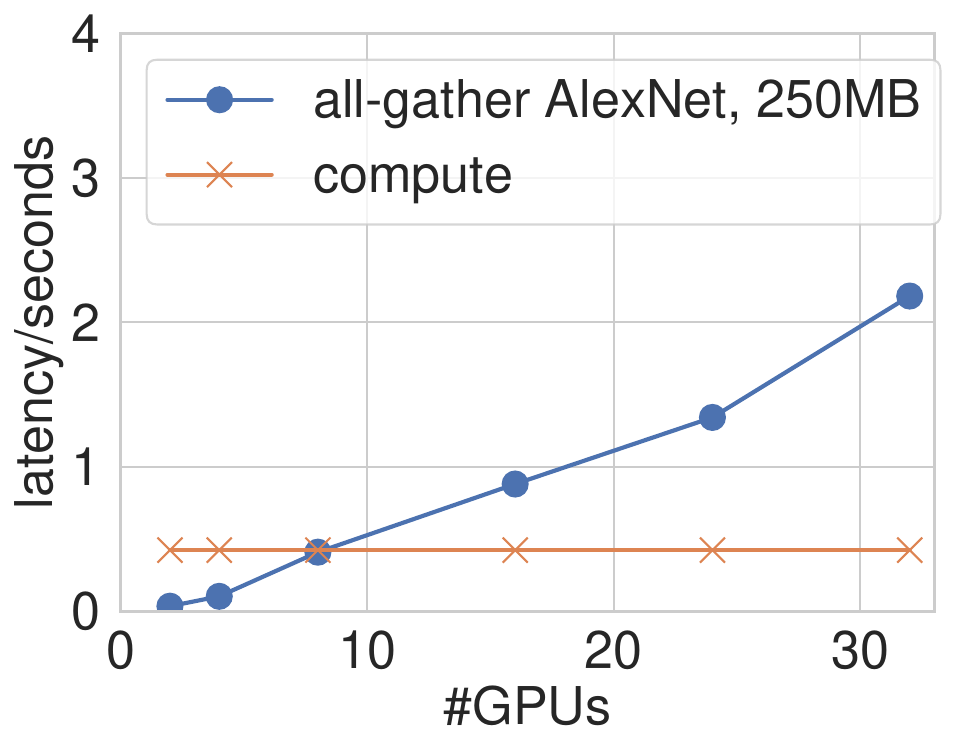}}  \quad
    \subfloat[][ResNet32]{\includegraphics[height=2.8cm]{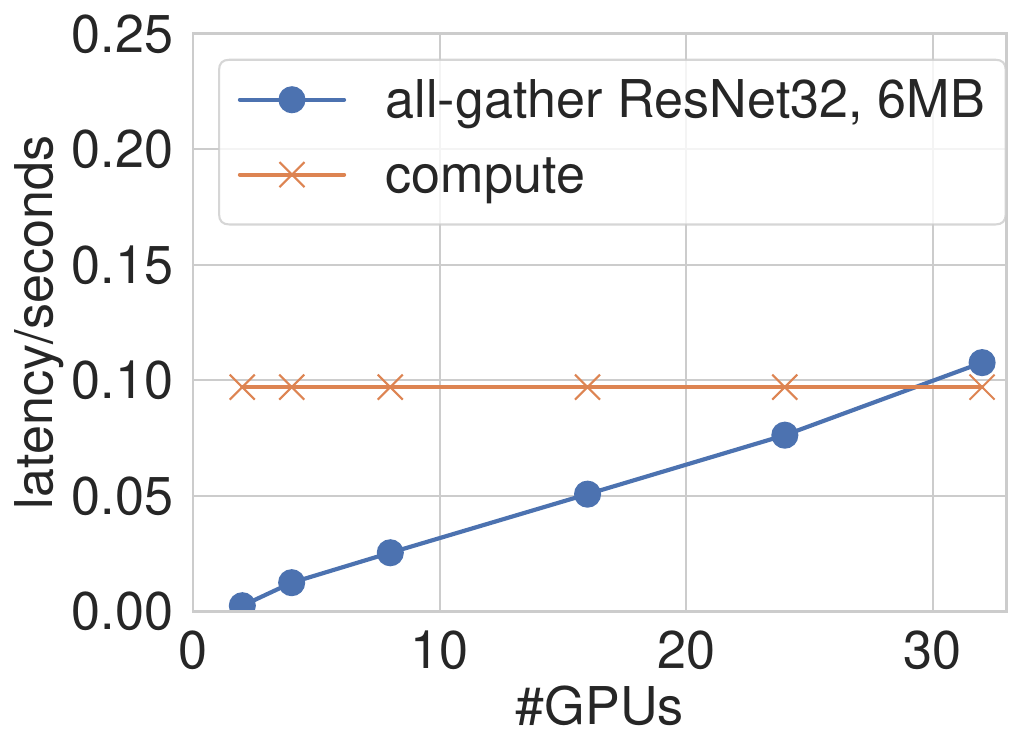}}
    \caption{\textbf{the latency for \textit{all-gather}} AlexNet and ResNet32 from 2 to 32 GPUs.  }
    \label{all_gather_speed}
\end{figure}

\textbf{Training setup}: The single GPU batch is set to 128 and 64 for ResNet32 and AlexNet, respectively. The momentum for both networks is set to 0.9. The learning rate for Resnet32 is 0.01 at epochs $\in[0, 130]$, and 0.001 afterwards; the learning rate for AlexNet is 0.01 at epochs $\in [0, 30]$, 0.001 at epochs $\in [30, 60]$, and 0.0001 afterwards.   

\textbf{Machine setup}: we conducted experiments on the Comet cluster hosted at San Diego Supercomputer Center. Comet has 36 GPU nodes, and each node is equipped with 4 NVIDIA TESLA P100 GPUs and 56 Gbps FDR InfiniBand, Figure~\ref{all_gather_speed} shows the \textit{allgather} cost almost linearly increases with the number of GPUs. This is because the total exchanged messages in \textit{allgather} linearly increase with \#GPUs~\cite{gabriel2004open}.
In our experiments, we used 8 GPUs in evaluating the accuracy and performance when integrating our compression methods in training, and up to 32 GPUs in evaluating the scalability
of the distributed training.

\subsection{Validation of Theorems}

\begin{figure}[!t]
    \centering
    \subfloat[][AlexNet]{\includegraphics[width=0.65\linewidth]{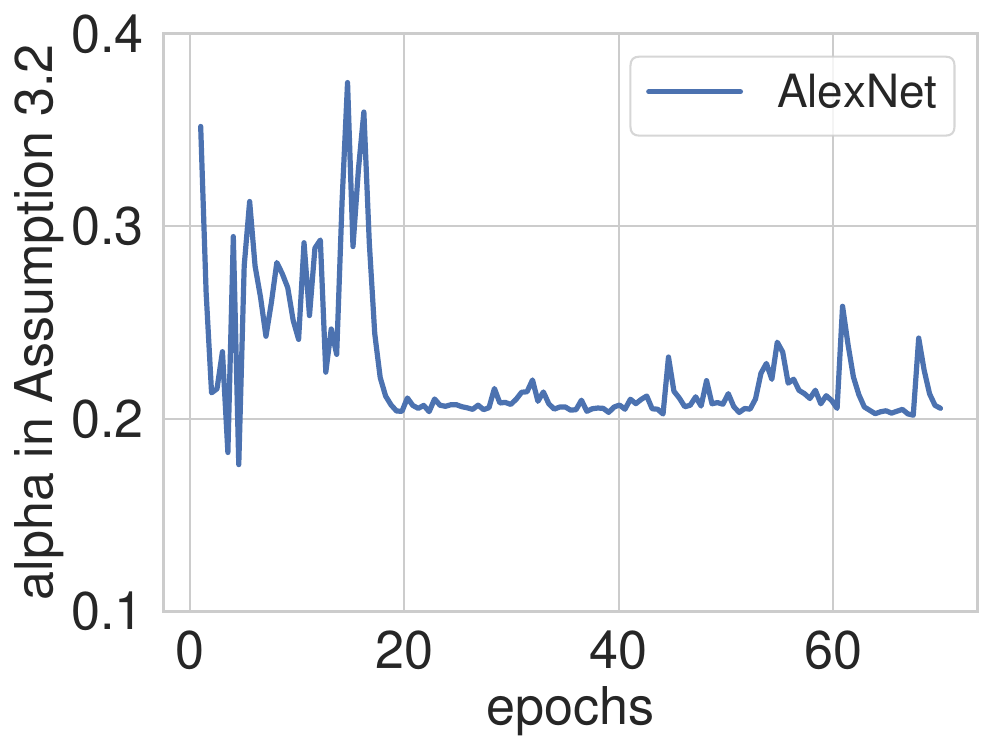}} \quad
    \subfloat[][ResNet32]{\includegraphics[width=0.65\linewidth]{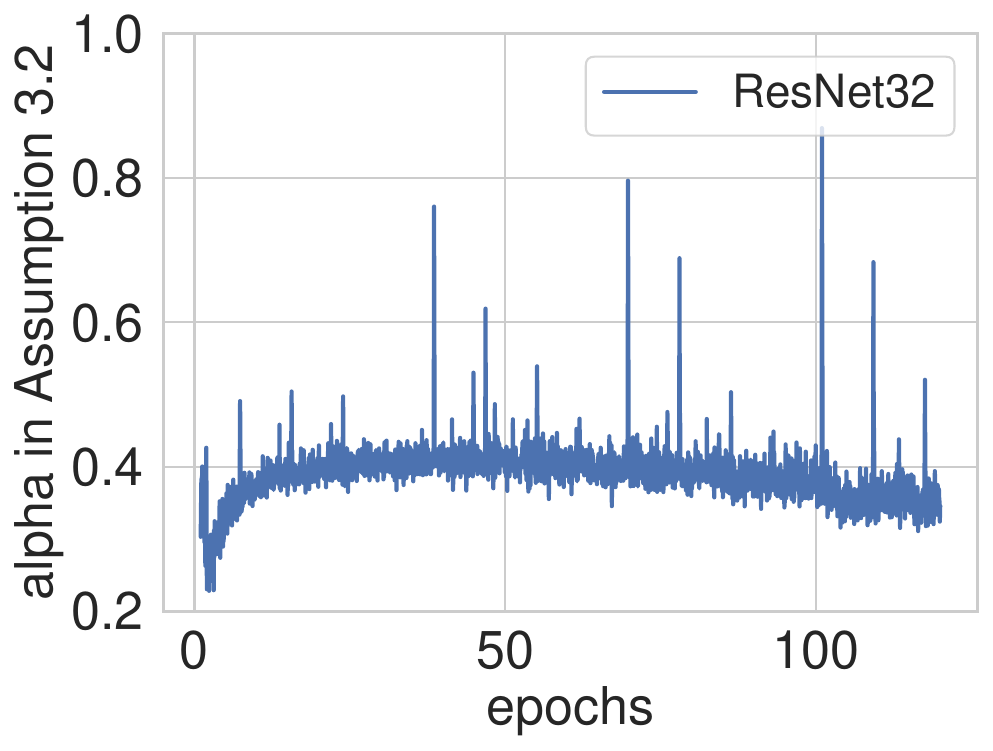}}	
    \caption{Empirical verification of Assumption~\ref{assumption:Truncation-err}.}
    \label{empirical_validation_assumption}
    \vspace{-0.6cm}
\end{figure}

\begin{figure}[!t]
    \centering
    \subfloat[][AlexNet]{\includegraphics[width=0.65\linewidth]{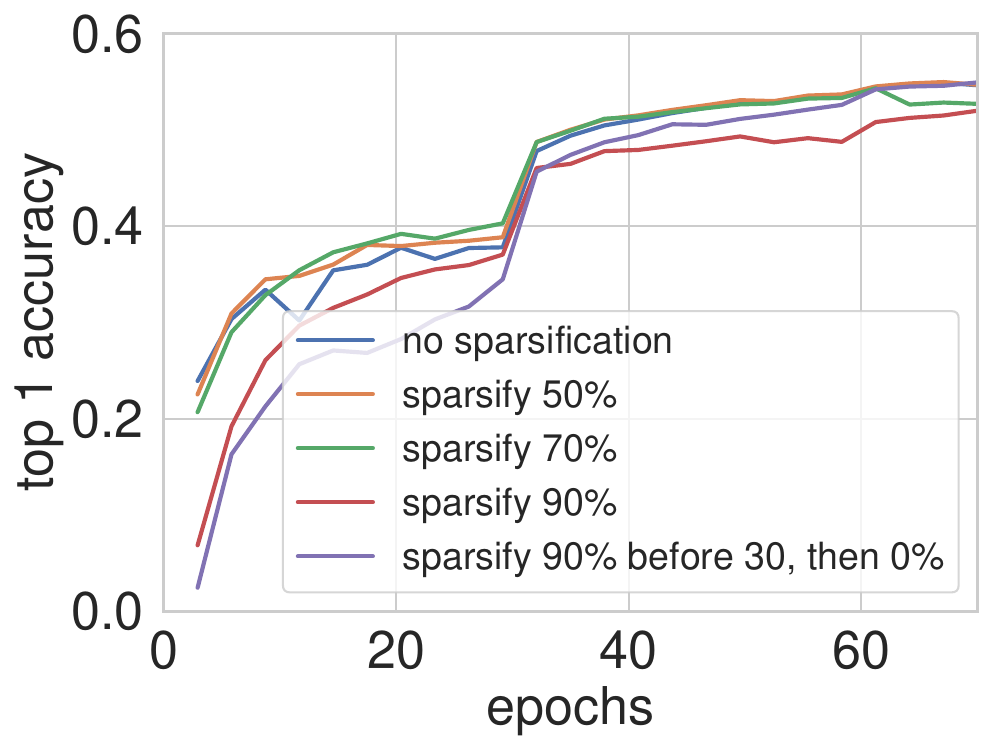}} \quad
    \subfloat[][ResNet32]{\includegraphics[width=0.65\linewidth]{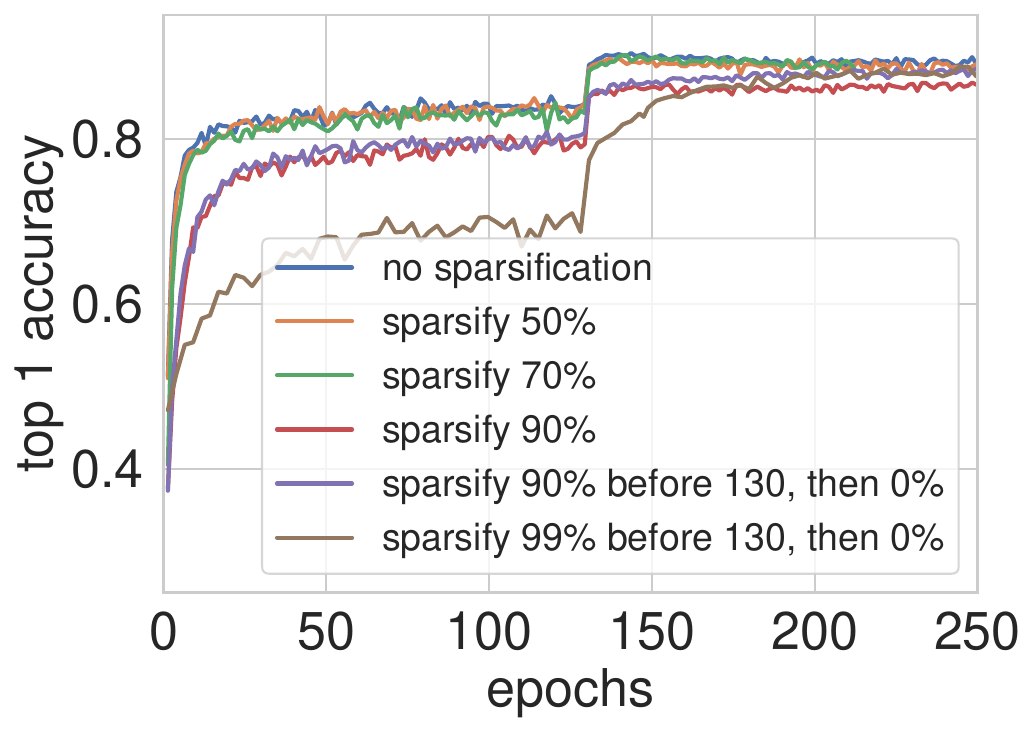}}
    \caption{Empirical validation of Theorem~\ref{diminishing_lr_theorem}.}
    \label{empirical_validation}
    \vspace{-0.5cm}
\end{figure}


\textbf{Verification of assumptions}: our convergence theorems rely on Assumption~\ref{assumption:variance} and Assumption~\ref{assumption:Truncation-err}. Assumption~\ref{assumption:variance} automatically holds due to the bounded gradients. Assumption~\ref{assumption:Truncation-err} always hold if $p = 1$, but it can break in very rare cases for $p > 1$. For example, $\alpha$ does not exist if $\bar{v} = [0, 0]$, given two opposite gradients, e.g. $\bar{v}_{1} = [-0.3, 0.5]$ and $\bar{v}_{2} = [0.3, -0.5]$. Though the scenario is very unlikely, we empirically validate Assumption~\ref{assumption:Truncation-err} on different training tasks by calculating $\alpha = \frac{\| \bar{v} - \bar{\hat{v}}\|}{\|\bar{v}\|}$. From Figure~\ref{empirical_validation_assumption}, $\alpha \in [0, 1]$ practically sustaining Assumption~\ref{assumption:Truncation-err}.

\textbf{Validation of theorems}: Theorem~\ref{fixed_lr_theorem} states a large compression ratio, i.e. large $\theta$, can jeopardize the convergence, and theorem~\ref{diminishing_lr_theorem} states that our FFT-based sparsified SGD is guaranteed to converge with a diminishing compression ratio. The goal of optimization is to find a local optimum, where the gradient approximates to zero, i.e, $\E[\|\nabla f(\xt)\|^2] \rightarrow 0$, as $K\rightarrow\infty$. From the inequality in theorem~\ref{fixed_lr_theorem}, $\frac{4(f(x^0) - f(x^{K-1}))}{K} \rightarrow 0$ as $K\rightarrow\infty$, leaving $E[\|\nabla f(\xt)\|^2]$ bounded by $(L\eta + \theta^2)\frac{2\eta\sigma^2}{b}$. $L\eta\frac{2\eta\sigma^2}{b}$ is the error term from SGD, and $\theta^2\frac{2\eta\sigma^2}{b}$ is the error term from the compression. Compared to the SGD, using a large $\theta$ in the gradient compression slacks off the bound for $\E[\|\nabla f(\xt)\|^2]$, causing the deterioration on both the validation accuracy and training loss. As shown in Figure~\ref{empirical_validation}, when $\theta = 0.5$ (i.e., sparsify 50\%), the accuracy and loss traces of AlexNet and ResNet32 behave exactly the same as SGD (\revise{shown as no sparsification}). When $\theta = 0.9$ (i.e., sparsify 90\%), both the training loss and validation accuracy significantly deviate from SGD, as a large $\theta$ increases the error term $\frac{2\eta\sigma^2\theta^2}{b}$ loosening the bound for $\E[\|\nabla f(\xt)\|^2]$. Therefore, $\theta$ is critical to retain the same performance as SGD, and it is tricky to select $\theta$ in practice. 
We present Theorem.\ref{diminishing_lr_theorem} to resolve this issue. The theorem compensates for Theorem~\ref{fixed_lr_theorem}, indicating that a large $\theta$ can still deliver the same accuracy as SGD if we shrink the $\theta$ during the training. Empirical results in Figure~\ref{empirical_validation} validate Theorem~\ref{diminishing_lr_theorem}. For example, by setting $\theta = 0.9$ (drop 90\%, \revise{red line}), both AlexNet and ResNet32 fail to converge to the same case of SGD. However, it is able to bring the accuracy back to the same result as the SGD in the same epochs simply by diminishing $\theta$ from 0.9 to 0 at the 30th epoch for AlexNet, and at the 130th epoch for ResNet32. Therefore, we claim both Theorem~\ref{fixed_lr_theorem} and Theorem~\ref{diminishing_lr_theorem} are legitimate.

\textbf{Implications of theorems}:
\revise{these two theorems explain the relationship between the accuracy and compression ratio $\theta$, and act as a guide to help preserve the training network accuracy by tuning the compression ratio during the training.}
Hence, in practice, to ensure the convergence, we can shrink $\theta$ along with the learning rate $\eta$ for the condition of $\theta_t^2 = L \eta_t$. In order to recover the accuracy, we can also reduce $\theta$ as the case in Fig.~\ref{empirical_validation} that a failure case ($\theta=0.9$) recovers the accuracy after reducing $\theta$ to 0 in the middle of training.

\subsection{Algorithm Comparisons}

\begin{figure}[!t]
    \centering
    \subfloat[][AlexNet]{\includegraphics[width=0.65\linewidth]{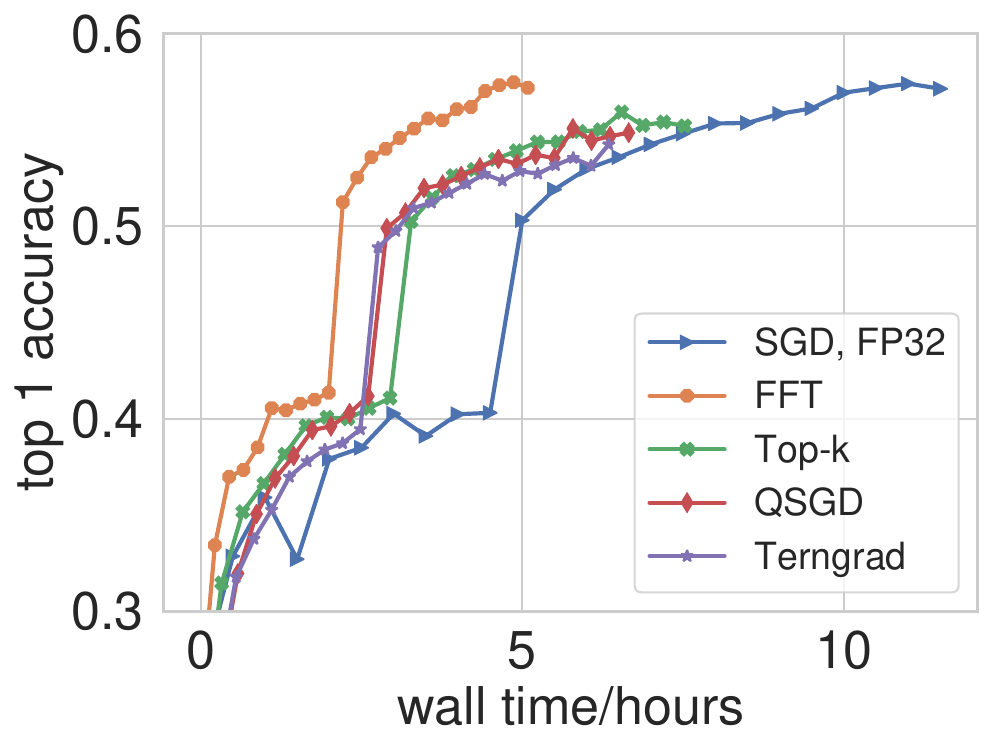}}
    
    \subfloat[][ResNet32]{\includegraphics[width=0.65\linewidth]{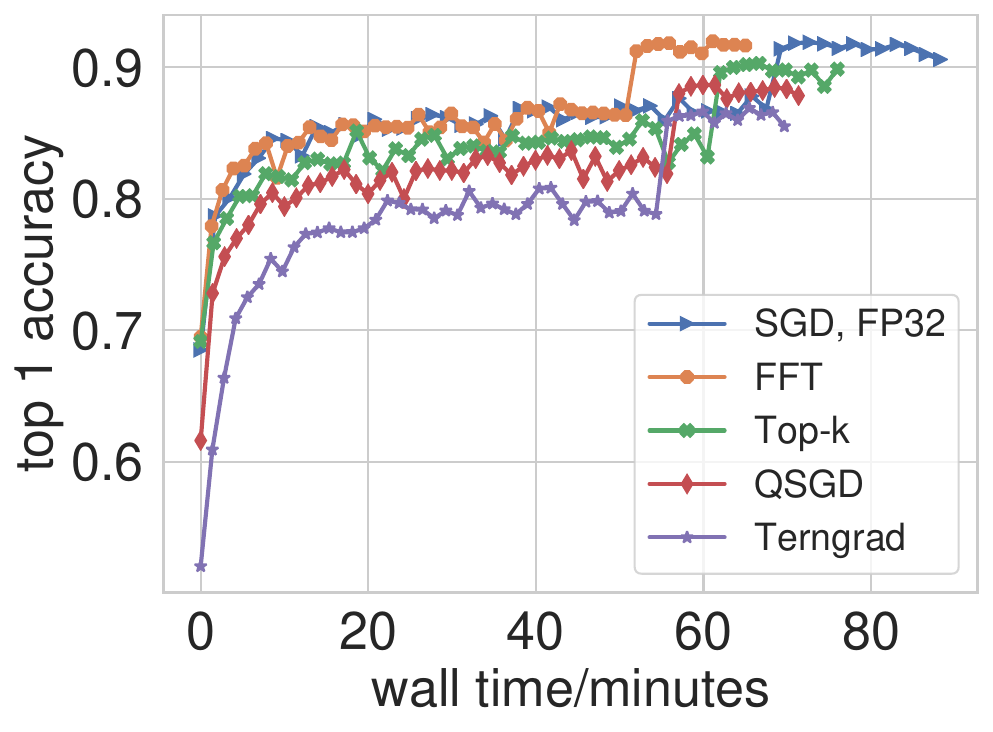}}
    \caption{\textbf{Training wall time on a 8 GPUs cluster}: FFT outperforms TernGrad, QSGD and Top-k in both the speed and test accuracy. FFT is faster for a high compression ratio by combining sparsification and quantization, while the better gradient quality of FFT explains the good accuracy, as we will show in Figure~\ref{gradient_dist_alg_comps}. }
    \label{training_walltime}
\end{figure}

\begin{table}
\small
\setlength{\tabcolsep}{0.15em}
\begin{tabular}{c c c c c c c c}
  \hline
    Method                &   \begin{tabular}{@{}c@{}}AlexNet \\ top1 acc\end{tabular}  & \begin{tabular}{@{}c@{}}Speedup \\ w.r.t SGD\end{tabular}  &  \begin{tabular}{@{}c@{}}ResNet32 \\ top1 acc\end{tabular}      & \begin{tabular}{@{}c@{}}Speedup \\ w.r.t SGD\end{tabular}  \\   \hline
    SGD, FP32                   &   56.52\%                         &1    &   92.11\%                 &  1       \\
    FFT                   &   56.61\%, ($+0.09\%$)     &2.26   &  91.99\%, ($-0.12\%$)          &  1.33x    \\
    Top-K                 &   55.07\%, ($-1.45\%$)     &1.53   &  90.31\%, ($-1.80\%$)          &  1.12x   \\
    QSGD                  &   53.54\%, ($-2.98\%$)     &1.73   &  88.66\%, ($-3.45\%$)          &  1.21x   \\
    TernGrad-noclip       &   52.86\%, ($-3.66\%$)     &1.81   &  86.90\%, ($-5.21\%$)          &  1.24x   \\
   \hline
\end{tabular}
\vspace{0.2cm}
\caption{\textbf{Summarization of Figure~\ref{training_walltime}}: the difference of test accuracy and the speedup over lossless SGD.}
\label{algorithm_results}
\vspace{-0.4cm}
\end{table}

\textbf{\textit{Choice of Algorithms}}: Here we evaluate our FFT-based techniques against 3 major gradient compression algorithms, Top-k sparsification~\cite{lin2017deep, alistarh2018convergence, aji2017sparse}, and Terngrad~\cite{wen2017terngrad} and QSGD~\cite{alistarh2017qsgd}. The baseline method is SGD using 32 bits float. Top-k sparsification thresholds the gradients w.r.t their magnitude, and the compression ratio is determined by 1/(1-$\theta$), where $\theta$ is the drop-out ratio. Please note that Top-k variant e.g. DGC~\cite{lin2017deep} utilizes heuristics like error accumulation and momentum correction to boost performance. To fairly evaluate Top-k sparsification against FFT based sparsification, we evaluated the vanilla Top-k v.s. the vanilla FFT sparsification, and finding heuristics to boost FFT sparsification is orthogonal to this study. Both Terngrad and QSGD map gradients to a discrete set. Specifically, Terngrad maps each gradient to the set of $\{-1, 0, 1\}*max(|g|)$, and thus 2 bits are sufficient to encode a gradient. Instead, QSGD uses $N$ bits to maps each gradient to a uniformly distributed discrete set containing $2^N$ bins. Please note TernGrad does not quantize the last classification layer to keep good performance \cite{wen2017terngrad}, while we sparsify the entire gradients.

\textbf{Algorithm Setup}: Regarding Top-k and FFT based sparsification, results from Figure~\ref{empirical_validation} and \cite{alistarh2018convergence} show a noticeable convergence slowdown after $\theta > 90\%$.
To maintain a reasonable accuracy, we choose $\theta$ = 85\% for both top-k and FFT based sparsification. We use $min = -1$ and $max = 1$ as the boundaries, and 10 bits in initializing our N-bit quantizer. Therefore, the compression ratio for Top-k is 1/(1-$\theta$) = 6.67x and FFT based is 21.3x with an additional 32/10 from quantizers. Terngrad uses 2 bits to encode a gradient, while we use 8 bins (3 bits) for QSGD to encode a gradient. As a result, the compression ratio of Terngrad is 16x and QSGD is 10.6x. Please note we calculate the compression ratio w.r.t gradients as gradient exchanges dominate communications in BSP. Following a similar setup in Figure~\ref{empirical_validation}, each algorithm is set to run 180 epochs on CIFAR10 and 70 epochs on ImageNet using 8 GPUs.


Figure~\ref{training_walltime} demonstrates that our framework outperforms QSGD, Terngrad, and Top-k in both the final accuracy and the training wall time on an 8 GPU cluster, and Table~\ref{algorithm_results} summarizes the \revise{test accuracy and speedup over the lossless SGD}. Particularly, FFT consistently reaches a similar accuracy to SGD with the highest speedup. To further investigate the algorithmic and system advantages of the FFT method, we investigate the gradient quality and the scalability of iteration throughput.

\begin{figure*}[!t]
    \centering
    \subfloat[][Ours\label{fft_sparsified_dist}]{\includegraphics[height=2.3cm]{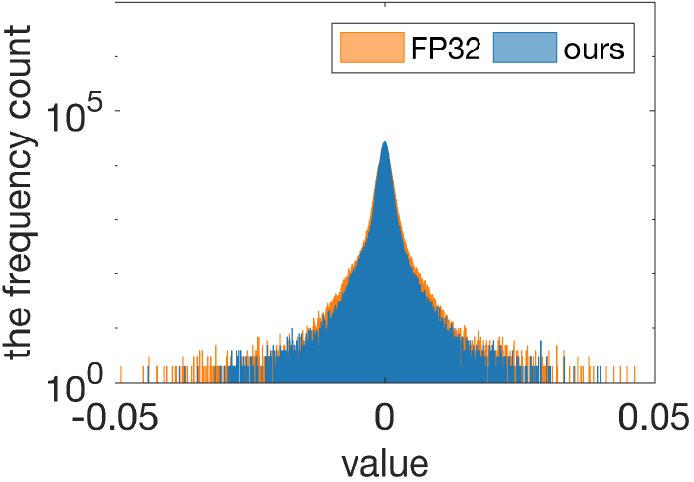}} 
    \subfloat[][Top-k\label{terngrad_dist}]{\includegraphics[height=2.3cm]{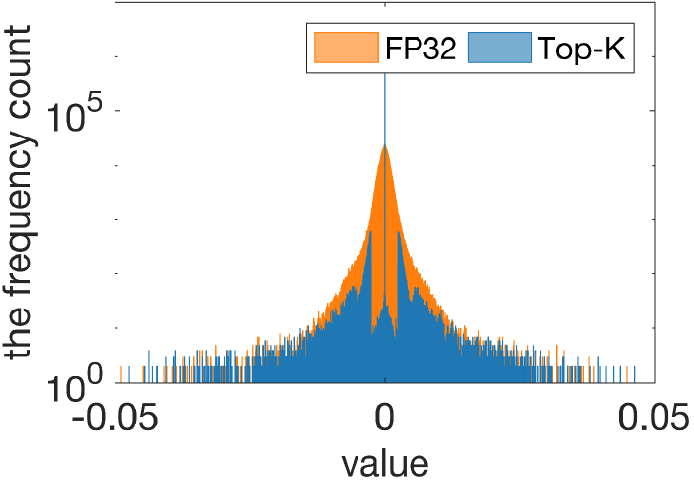}} 	
    \subfloat[][Terngrad\label{direct_sparsified_dist}]{\includegraphics[height=2.3cm]{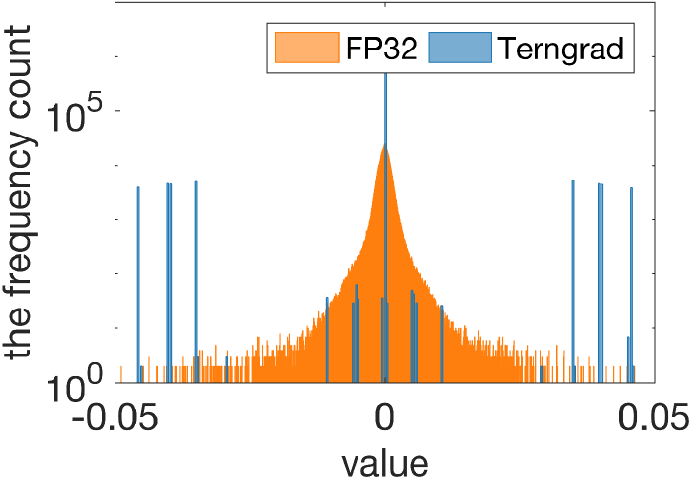}} 
    \subfloat[][QSGD\label{qsgd_dist}]{\includegraphics[height=2.3cm]{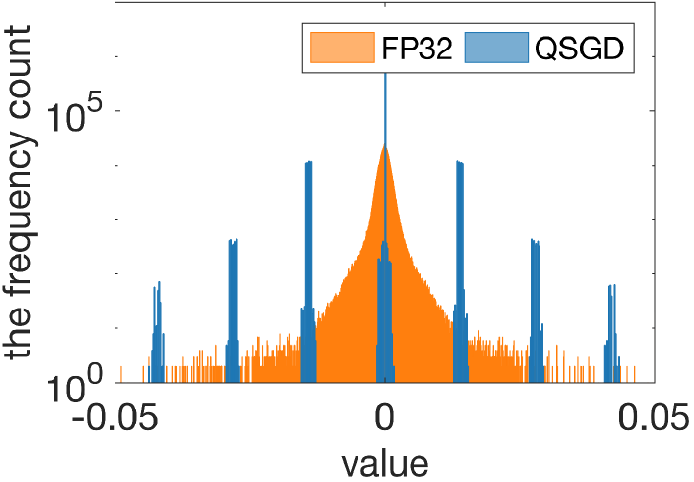}} 
    \subfloat[][reconstruction error\label{recon_error}]{\includegraphics[height=2.3cm]{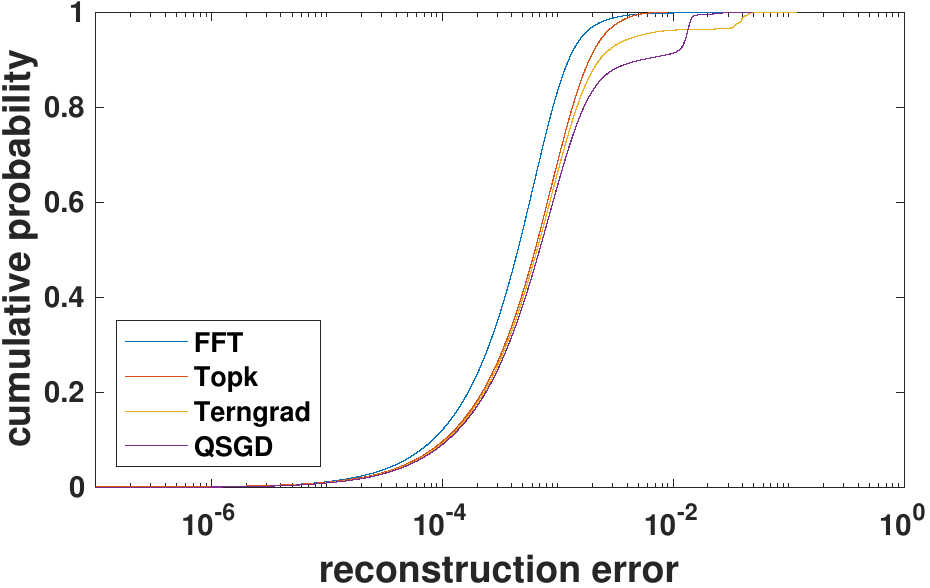}} 
    \caption{ \textbf{(a)$\rightarrow$(d)}: Histogram of reconstructed gradients (blue) by FFT ($\theta = 0.85$), Top-k ($\theta = 0.85$), QSGD and Terngrad v.s. the original. The reconstructed gradients by FFT is the closest to the original(FP32). \textbf{(e)} Cumulative error distribution of $|g_{i} - \hat{g}_{i}|$, where $g_{i}$ is the $i$-th true gradient, and $\hat{g}_{i}$ is the $i$-th sparsified gradient. FFT incurs less errors than other approaches for 99.7\% of the gradients. }
    \label{gradient_dist_alg_comps}
    \vspace{-0.5cm}
\end{figure*}

\subsubsection{The algorithmic advantages of FFT}\label{algorithm_adv}
We claim the algorithmic advantages of FFT for preserving the original gradient distribution and rendering fewer reconstruction errors than others. We uniformly sampled the gradients of ResNet32 every 10 epochs during the training. Figure~\ref{gradient_dist_alg_comps} demonstrates the distribution of reconstructed gradients w.r.t the gradients before the compression. FFT is the only one that retains the original gradient distribution, though $\theta = 85\%$ frequency has been removed. In contrast, Top-k loses the peak for eliminating the near-zero elements at the same $\theta$. Similarly, QSGD presents 7 clusters for using 8 bins to represent a gradient; and, in general, TernGrad shows 3 major clusters around \{0, -0.05, 0.05\} for using a quantization set of \{-1, 0, 1\}. Please note that Terngrad shows 11 bars; this is due to the aggregation of sparsified gradients from each node. Aside from qualitatively inspecting the gradient distribution, we also quantitatively examined the empirical cumulative distribution of the reconstruction error in Figure~\ref{recon_error}. FFT demonstrates the lowest error within the range of $[10^{-5}, 10^{-2}]$. Therefore, FFT can reach better accuracy in the same training iterations.

\begin{figure}[!t]
    \centering
	    \subfloat[][AlexNet\label{alexnet_ps_imp}]{\includegraphics[height=2.6cm]{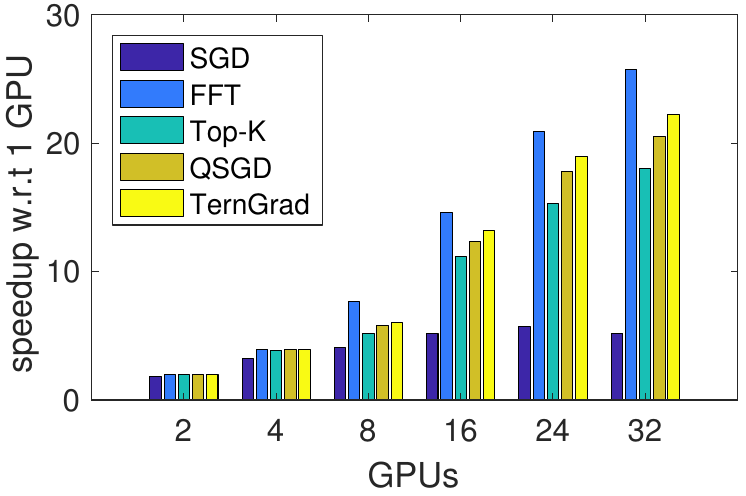}} \quad
	    \subfloat[][ResNet32\label{resnet_ps_imp}]{\includegraphics[height=2.6cm]{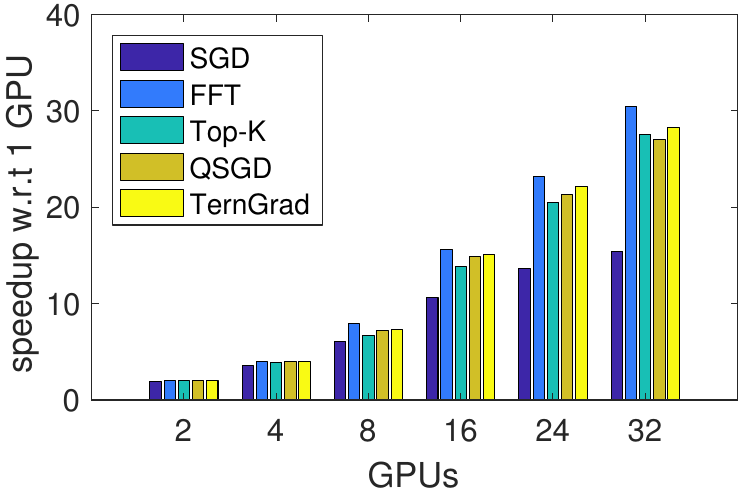}}
    \caption{\textit{\textbf{\revise{Weak} scalability from 2 to 32 GPUs}}: we measure the iteration throughput, and calculate the speedup w.r.t 1 GPU.}
    \label{scalability}
\end{figure}

\subsubsection{The system advantages of FFT}\label{system_adv} 
Our compression framework fully exploits both the gradient sparsity and the redundancy in 32-bit floating point by further quantizing the FFT sparsified gradient. It enables FFT to deliver a much higher iteration throughput than QSGD, TernGrad, and Top-K. Following the same setting in Figures~\ref{training_walltime}, Figure~\ref{scalability} demonstrates the iteration throughput of training AlexNet and ResNet32 from 2 to 32 GPUs. Please note that using a very large $\theta$ (e.g., 0.999) can get an impressive speedup, but it also drastically hurts the final accuracy. Here we still use $\theta=85\%$. The gradients of AlexNet (ImageNet) is around 250 MB, while the gradients of ResNet32 (CIFAR-10) are only 6MB. Therefore, the scalability of AlexNet is generally better than ResNet32. Better results are also observable if using a slow network, e.g., 100MB Gbps. When GPUs $\leq 4$, the speedup is similar as communications are intra-node through PCI-E. FFT still consistently demonstrates the highest iteration throughput for a better compression ratio when GPUs increase from 8 to 32.

\section{Related Work}
\label{related_work}
We categorize the existing lossy gradient compression into two groups: (1) \emph{quantization} and (2) \emph{sparsification}.

\textit{Quantization}: 1-bit SGD~\cite{seide20141} is among the first to quantize gradients to alleviate the communication cost in the distributed training. Specifically, it quantizes a 32-bit IEEE-754 float into a binary of [0, 1] to achieve a compression ratio of 32$\times$. Though their methods are purely heuristic, and their empirical validations demonstrate a slight loss of accuracy, it shows the possibility to train a network with highly lossy gradients. Subsequently, several quantization methods have been proposed. Flexpoint~\cite{flexpoint} uses block floating-point encoding based on current gradient/weight values. HOGWILD!~\cite{de2015taming} quantizes both weights and gradients into 8-bit integers by rounding off floats (i.e., low-precision training); but this idea is largely restricted by the availability of low-precision instruction sets. TernGrad~\cite{wen2017terngrad} quantizes a gradient as [-1, 0, 1]$*|max(g)|$, while QSGD~\cite{alistarh2017qsgd} stochastically quantizes gradients onto a uniformly discretized set. Both approaches distribute the precision uniformly across the representable range---ignoring both the distribution and the range of the gradients. As we show, gradients follow a normal distribution (Figure~\ref{grad_dist}). In our range-based quantizer, we allocate precision for the range and the distribution of the values to better exploit the limited number of bits. Most importantly, QSGD and TernGrad damage the original gradient distribution due to limited representable values after the quantization (Figure~\ref{gradient_dist_alg_comps}). As a result, TernGrad and QSGD incur an observable deterioration in the final accuracy (Table~\ref{algorithm_results}).

\textit{Sparsification}: Aji and Heafield~\cite{aji2017sparse} present the very first Top-k gradient sparsification showing that the training can be done with a small accuracy loss by setting the $99\%$ smallest gradients to zeros. Based on the Top-k thresholding, Han et al.~\cite{han2015deep} propose Deep Compression, which uses heuristics like momentum correction and error accumulation to resolve the accuracy loss in the vanilla Top-k. Please note that these heuristics are orthogonal to our methods and can also be applied to improve ours. Jin et al.~\cite{deepsz} propose DEEPSZ, which performs error-bounded
lossy compression on the pruned weight. It is a modification of the SZ lossy compression framework~\cite{sz}.
Cédric et al.~\cite{sparsecml} propose a communication sparsification approach called SPARCML. Different from ours, the SPARCML focuses on the implementation of MPI collective operations of sparse data.
D. Alistarh et al.~\cite{alistarh2018convergence} analyze the convergence of Top-k compression. With \cite{alistarh2018convergence}, we noticed a significant convergence slowdown at a large sparsity. As we investigated, these Top-k methods also distort the gradient distribution at a large sparsity, yielding higher approximation errors than the original gradients. At the same sparsity ($\theta$), our FFT method is much better at preserving the original gradient distribution and shows less approximation error and better results.

\section{Conclusion}

As indicated in Sec.~\ref{motivation}, exchanging gradients is the major bottleneck for the distributed DNN training. To alleviate this communication bottleneck, this paper proposes a lossy gradient compression framework that uses an FFT-based gradient sparsification and a range-based, variable-precision, floating-point representation. We theoretically prove that our techniques preserve the convergence and the final accuracy by adapting the sparsification ratio $\theta$ during the training, and empirically verify the assumptions and the theory.


%
%
%
At the same sparsification ratio ($\theta$), we show FFT preserves more gradient information than other state-of-the-art lossy methodologies including Top-K sparsification, Terngrad, and QSGD. Besides, our adaptive float quantization further improves the overall compression ratio with negligible loss of gradient information (Fig.~\ref{gradient_dist_alg_comps}). These advantages enable us to use a larger compression ratio in retaining the same accuracy as the lossless SGD, than other lossy methodologies to improve the scalability (Fig.~\ref{scalability}) in the distributed training.

Our lossy gradient compression framework demands a highly efficient allreduce that supports communications of sparse data, while current MPI implementations, such as Open MPI or MVAPICH, lack the support of sparse collectives. Though this work uses all-gather to circumvent this issue, future research and development of a bandwidth-efficient allreduce with the sparse support are highly desired to facilitate the deployment of lossy gradient compression techniques in practice.

\section{Acknowledgement}
The work done by George Bosilca is supported by the National Science Foundation project SI2-SSI: EVOLVE, under Award Number 1664142.


\bibliographystyle{ACM-Reference-Format}
\bibliography{hpdc}


\begin{thebibliography}{30}


\ifx \showCODEN    \undefined \def \showCODEN     #1{\unskip}     \fi
\ifx \showDOI      \undefined \def \showDOI       #1{#1}\fi
\ifx \showISBNx    \undefined \def \showISBNx     #1{\unskip}     \fi
\ifx \showISBNxiii \undefined \def \showISBNxiii  #1{\unskip}     \fi
\ifx \showISSN     \undefined \def \showISSN      #1{\unskip}     \fi
\ifx \showLCCN     \undefined \def \showLCCN      #1{\unskip}     \fi
\ifx \shownote     \undefined \def \shownote      #1{#1}          \fi
\ifx \showarticletitle \undefined \def \showarticletitle #1{#1}   \fi
\ifx \showURL      \undefined \def \showURL       {\relax}        \fi
\providecommand\bibfield[2]{#2}
\providecommand\bibinfo[2]{#2}
\providecommand\natexlab[1]{#1}
\providecommand\showeprint[2][]{arXiv:#2}

\bibitem[\protect\citeauthoryear{Abadi, Barham, Chen, Chen, Davis, Dean, Devin,
  Ghemawat, Irving, Isard, et~al\mbox{.}}{Abadi et~al\mbox{.}}{2016}]%
        {abadi2016tensorflow}
\bibfield{author}{\bibinfo{person}{Mart{\'\i}n Abadi}, \bibinfo{person}{Paul
  Barham}, \bibinfo{person}{Jianmin Chen}, \bibinfo{person}{Zhifeng Chen},
  \bibinfo{person}{Andy Davis}, \bibinfo{person}{Jeffrey Dean},
  \bibinfo{person}{Matthieu Devin}, \bibinfo{person}{Sanjay Ghemawat},
  \bibinfo{person}{Geoffrey Irving}, \bibinfo{person}{Michael Isard},
  {et~al\mbox{.}}} \bibinfo{year}{2016}\natexlab{}.
\newblock \showarticletitle{TensorFlow: A System for Large-Scale Machine
  Learning.}. In \bibinfo{booktitle}{\emph{OSDI}}.
\newblock


\bibitem[\protect\citeauthoryear{Aji and Heafield}{Aji and Heafield}{2017}]%
        {aji2017sparse}
\bibfield{author}{\bibinfo{person}{Alham~Fikri Aji} {and}
  \bibinfo{person}{Kenneth Heafield}.} \bibinfo{year}{2017}\natexlab{}.
\newblock \showarticletitle{Sparse communication for distributed gradient
  descent}.
\newblock \bibinfo{journal}{\emph{arXiv preprint arXiv:1704.05021}}
  (\bibinfo{year}{2017}).
\newblock


\bibitem[\protect\citeauthoryear{Alabi, Blanchard, Gordon, and Steinbach}{Alabi
  et~al\mbox{.}}{2012}]%
        {alabi2012fast}
\bibfield{author}{\bibinfo{person}{Tolu Alabi}, \bibinfo{person}{Jeffrey~D
  Blanchard}, \bibinfo{person}{Bradley Gordon}, {and} \bibinfo{person}{Russel
  Steinbach}.} \bibinfo{year}{2012}\natexlab{}.
\newblock \showarticletitle{Fast k-selection algorithms for graphics processing
  units}.
\newblock \bibinfo{journal}{\emph{Journal of Experimental Algorithmics (JEA)}}
  \bibinfo{volume}{17} (\bibinfo{year}{2012}), \bibinfo{pages}{4--2}.
\newblock


\bibitem[\protect\citeauthoryear{Alistarh, Grubic, Li, Tomioka, and
  Vojnovic}{Alistarh et~al\mbox{.}}{2017}]%
        {alistarh2017qsgd}
\bibfield{author}{\bibinfo{person}{Dan Alistarh}, \bibinfo{person}{Demjan
  Grubic}, \bibinfo{person}{Jerry Li}, \bibinfo{person}{Ryota Tomioka}, {and}
  \bibinfo{person}{Milan Vojnovic}.} \bibinfo{year}{2017}\natexlab{}.
\newblock \showarticletitle{QSGD: Communication-Efficient SGD via Gradient
  Quantization and Encoding}. In \bibinfo{booktitle}{\emph{Advances in Neural
  Information Processing Systems}}.
\newblock


\bibitem[\protect\citeauthoryear{Alistarh, Hoefler, Johansson, Konstantinov,
  Khirirat, and Renggli}{Alistarh et~al\mbox{.}}{2018}]%
        {alistarh2018convergence}
\bibfield{author}{\bibinfo{person}{Dan Alistarh}, \bibinfo{person}{Torsten
  Hoefler}, \bibinfo{person}{Mikael Johansson}, \bibinfo{person}{Nikola
  Konstantinov}, \bibinfo{person}{Sarit Khirirat}, {and}
  \bibinfo{person}{C{\'e}dric Renggli}.} \bibinfo{year}{2018}\natexlab{}.
\newblock \showarticletitle{The convergence of sparsified gradient methods}. In
  \bibinfo{booktitle}{\emph{Advances in Neural Information Processing
  Systems}}. \bibinfo{pages}{5975--5985}.
\newblock


\bibitem[\protect\citeauthoryear{Awan, Hamidouche, Hashmi, and Panda}{Awan
  et~al\mbox{.}}{2017}]%
        {awan2017s}
\bibfield{author}{\bibinfo{person}{Ammar~Ahmad Awan}, \bibinfo{person}{Khaled
  Hamidouche}, \bibinfo{person}{Jahanzeb~Maqbool Hashmi}, {and}
  \bibinfo{person}{Dhabaleswar~K Panda}.} \bibinfo{year}{2017}\natexlab{}.
\newblock \showarticletitle{S-Caffe: Co-designing MPI runtimes and Caffe for
  scalable deep learning on modern GPU clusters}. In
  \bibinfo{booktitle}{\emph{Proceedings of the 22nd ACM SIGPLAN Symposium on
  Principles and Practice of Parallel Programming}}.
\newblock


\bibitem[\protect\citeauthoryear{Berson}{Berson}{1992}]%
        {berson1992client}
\bibfield{author}{\bibinfo{person}{Alex Berson}.}
  \bibinfo{year}{1992}\natexlab{}.
\newblock \bibinfo{booktitle}{\emph{Client-server architecture}}.
\newblock Number IEEE-802. \bibinfo{publisher}{McGraw-Hill}.
\newblock


\bibitem[\protect\citeauthoryear{Chen, Li, Li, Lin, Wang, Wang, Xiao, Xu,
  Zhang, and Zhang}{Chen et~al\mbox{.}}{2015}]%
        {chen2015mxnet}
\bibfield{author}{\bibinfo{person}{Tianqi Chen}, \bibinfo{person}{Mu Li},
  \bibinfo{person}{Yutian Li}, \bibinfo{person}{Min Lin},
  \bibinfo{person}{Naiyan Wang}, \bibinfo{person}{Minjie Wang},
  \bibinfo{person}{Tianjun Xiao}, \bibinfo{person}{Bing Xu},
  \bibinfo{person}{Chiyuan Zhang}, {and} \bibinfo{person}{Zheng Zhang}.}
  \bibinfo{year}{2015}\natexlab{}.
\newblock \showarticletitle{Mxnet: A flexible and efficient machine learning
  library for heterogeneous distributed systems}.
\newblock \bibinfo{journal}{\emph{arXiv preprint arXiv:1512.01274}}
  (\bibinfo{year}{2015}).
\newblock


\bibitem[\protect\citeauthoryear{De~Sa, Zhang, Olukotun, and R{\'e}}{De~Sa
  et~al\mbox{.}}{2015}]%
        {de2015taming}
\bibfield{author}{\bibinfo{person}{Christopher~M De~Sa}, \bibinfo{person}{Ce
  Zhang}, \bibinfo{person}{Kunle Olukotun}, {and} \bibinfo{person}{Christopher
  R{\'e}}.} \bibinfo{year}{2015}\natexlab{}.
\newblock \showarticletitle{Taming the wild: A unified analysis of
  hogwild-style algorithms}. In \bibinfo{booktitle}{\emph{Advances in neural
  information processing systems}}. \bibinfo{pages}{2674--2682}.
\newblock


\bibitem[\protect\citeauthoryear{Dean, Corrado, Monga, Chen, Devin, Mao,
  Senior, Tucker, Yang, Le, et~al\mbox{.}}{Dean et~al\mbox{.}}{2012}]%
        {dean2012large}
\bibfield{author}{\bibinfo{person}{Jeffrey Dean}, \bibinfo{person}{Greg
  Corrado}, \bibinfo{person}{Rajat Monga}, \bibinfo{person}{Kai Chen},
  \bibinfo{person}{Matthieu Devin}, \bibinfo{person}{Mark Mao},
  \bibinfo{person}{Andrew Senior}, \bibinfo{person}{Paul Tucker},
  \bibinfo{person}{Ke Yang}, \bibinfo{person}{Quoc~V Le}, {et~al\mbox{.}}}
  \bibinfo{year}{2012}\natexlab{}.
\newblock \showarticletitle{Large scale distributed deep networks}. In
  \bibinfo{booktitle}{\emph{Advances in neural information processing
  systems}}.
\newblock


\bibitem[\protect\citeauthoryear{Di and Cappello}{Di and Cappello}{2016}]%
        {sz}
\bibfield{author}{\bibinfo{person}{S. Di} {and} \bibinfo{person}{F. Cappello}.}
  \bibinfo{year}{2016}\natexlab{}.
\newblock \showarticletitle{Fast Error-Bounded Lossy HPC Data Compression with
  SZ}. In \bibinfo{booktitle}{\emph{2016 IEEE International Parallel and
  Distributed Processing Symposium (IPDPS)}}.
\newblock


\bibitem[\protect\citeauthoryear{Gabriel, Fagg, Bosilca, Angskun, Dongarra,
  Squyres, Sahay, Kambadur, Barrett, Lumsdaine, et~al\mbox{.}}{Gabriel
  et~al\mbox{.}}{2004}]%
        {gabriel2004open}
\bibfield{author}{\bibinfo{person}{Edgar Gabriel}, \bibinfo{person}{Graham~E
  Fagg}, \bibinfo{person}{George Bosilca}, \bibinfo{person}{Thara Angskun},
  \bibinfo{person}{Jack~J Dongarra}, \bibinfo{person}{Jeffrey~M Squyres},
  \bibinfo{person}{Vishal Sahay}, \bibinfo{person}{Prabhanjan Kambadur},
  \bibinfo{person}{Brian Barrett}, \bibinfo{person}{Andrew Lumsdaine},
  {et~al\mbox{.}}} \bibinfo{year}{2004}\natexlab{}.
\newblock \showarticletitle{Open MPI: Goals, concept, and design of a next
  generation MPI implementation}. In \bibinfo{booktitle}{\emph{European
  Parallel Virtual Machine/Message Passing Interface Users’ Group Meeting}}.
\newblock


\bibitem[\protect\citeauthoryear{Ghadimi and Lan}{Ghadimi and Lan}{2013}]%
        {ghadimi2013stochastic}
\bibfield{author}{\bibinfo{person}{Saeed Ghadimi} {and}
  \bibinfo{person}{Guanghui Lan}.} \bibinfo{year}{2013}\natexlab{}.
\newblock \showarticletitle{Stochastic first-and zeroth-order methods for
  nonconvex stochastic programming}.
\newblock \bibinfo{journal}{\emph{SIAM Journal on Optimization}}
  \bibinfo{volume}{23}, \bibinfo{number}{4} (\bibinfo{year}{2013}),
  \bibinfo{pages}{2341--2368}.
\newblock


\bibitem[\protect\citeauthoryear{Han, Mao, and Dally}{Han
  et~al\mbox{.}}{2015}]%
        {han2015deep}
\bibfield{author}{\bibinfo{person}{Song Han}, \bibinfo{person}{Huizi Mao},
  {and} \bibinfo{person}{William~J Dally}.} \bibinfo{year}{2015}\natexlab{}.
\newblock \showarticletitle{Deep compression: Compressing deep neural networks
  with pruning, trained quantization and huffman coding}.
\newblock \bibinfo{journal}{\emph{arXiv preprint arXiv:1510.00149}}
  (\bibinfo{year}{2015}).
\newblock


\bibitem[\protect\citeauthoryear{He, Zhang, Ren, and Sun}{He
  et~al\mbox{.}}{2016}]%
        {he2016deep}
\bibfield{author}{\bibinfo{person}{Kaiming He}, \bibinfo{person}{Xiangyu
  Zhang}, \bibinfo{person}{Shaoqing Ren}, {and} \bibinfo{person}{Jian Sun}.}
  \bibinfo{year}{2016}\natexlab{}.
\newblock \showarticletitle{Deep residual learning for image recognition}. In
  \bibinfo{booktitle}{\emph{Proceedings of the IEEE conference on computer
  vision and pattern recognition}}.
\newblock


\bibitem[\protect\citeauthoryear{Jia, Shelhamer, Donahue, Karayev, Long,
  Girshick, Guadarrama, and Darrell}{Jia et~al\mbox{.}}{2014}]%
        {jia2014caffe}
\bibfield{author}{\bibinfo{person}{Yangqing Jia}, \bibinfo{person}{Evan
  Shelhamer}, \bibinfo{person}{Jeff Donahue}, \bibinfo{person}{Sergey Karayev},
  \bibinfo{person}{Jonathan Long}, \bibinfo{person}{Ross Girshick},
  \bibinfo{person}{Sergio Guadarrama}, {and} \bibinfo{person}{Trevor Darrell}.}
  \bibinfo{year}{2014}\natexlab{}.
\newblock \showarticletitle{{Caffe: Convolutional architecture for fast feature
  embedding}}. In \bibinfo{booktitle}{\emph{Proceedings of the 22nd ACM
  international conference on Multimedia}}.
\newblock


\bibitem[\protect\citeauthoryear{Jin, Di, Liang, Tian, Tao, and Cappello}{Jin
  et~al\mbox{.}}{2019}]%
        {deepsz}
\bibfield{author}{\bibinfo{person}{Sian Jin}, \bibinfo{person}{Sheng Di},
  \bibinfo{person}{Xin Liang}, \bibinfo{person}{Jiannan Tian},
  \bibinfo{person}{Dingwen Tao}, {and} \bibinfo{person}{Franck Cappello}.}
  \bibinfo{year}{2019}\natexlab{}.
\newblock \showarticletitle{{DeepSZ: A Novel Framework to Compress Deep Neural
  Networks by Using Error-Bounded Lossy Compression}}
  \emph{(\bibinfo{series}{HPDC'19})}. \bibinfo{publisher}{Association for
  Computing Machinery}, \bibinfo{address}{New York, NY, USA},
  \bibinfo{pages}{159–170}.
\newblock
\showISBNx{9781450366700}
\urldef\tempurl%
\url{https://doi.org/10.1145/3307681.3326608}
\showDOI{\tempurl}


\bibitem[\protect\citeauthoryear{K\"{o}ster, Webb, Wang, Nassar, Bansal,
  Constable, Elibol, Gray, Hall, Hornof, Khosrowshahi, Kloss, Pai, and
  Rao}{K\"{o}ster et~al\mbox{.}}{2017}]%
        {flexpoint}
\bibfield{author}{\bibinfo{person}{Urs K\"{o}ster}, \bibinfo{person}{Tristan
  Webb}, \bibinfo{person}{Xin Wang}, \bibinfo{person}{Marcel Nassar},
  \bibinfo{person}{Arjun~K Bansal}, \bibinfo{person}{William Constable},
  \bibinfo{person}{Oguz Elibol}, \bibinfo{person}{Scott Gray},
  \bibinfo{person}{Stewart Hall}, \bibinfo{person}{Luke Hornof},
  \bibinfo{person}{Amir Khosrowshahi}, \bibinfo{person}{Carey Kloss},
  \bibinfo{person}{Ruby~J Pai}, {and} \bibinfo{person}{Naveen Rao}.}
  \bibinfo{year}{2017}\natexlab{}.
\newblock \showarticletitle{Flexpoint: An Adaptive Numerical Format for
  Efficient Training of Deep Neural Networks}.
\newblock In \bibinfo{booktitle}{\emph{Advances in Neural Information
  Processing Systems 30}}, \bibfield{editor}{\bibinfo{person}{I.~Guyon},
  \bibinfo{person}{U.~V. Luxburg}, \bibinfo{person}{S.~Bengio},
  \bibinfo{person}{H.~Wallach}, \bibinfo{person}{R.~Fergus},
  \bibinfo{person}{S.~Vishwanathan}, {and} \bibinfo{person}{R.~Garnett}}
  (Eds.). \bibinfo{publisher}{Curran Associates, Inc.},
  \bibinfo{pages}{1742--1752}.
\newblock


\bibitem[\protect\citeauthoryear{Li, Andersen, Park, Smola, Ahmed, Josifovski,
  Long, Shekita, and Su}{Li et~al\mbox{.}}{2014}]%
        {li2014scaling}
\bibfield{author}{\bibinfo{person}{Mu Li}, \bibinfo{person}{David~G Andersen},
  \bibinfo{person}{Jun~Woo Park}, \bibinfo{person}{Alexander~J Smola},
  \bibinfo{person}{Amr Ahmed}, \bibinfo{person}{Vanja Josifovski},
  \bibinfo{person}{James Long}, \bibinfo{person}{Eugene~J Shekita}, {and}
  \bibinfo{person}{Bor-Yiing Su}.} \bibinfo{year}{2014}\natexlab{}.
\newblock \showarticletitle{Scaling Distributed Machine Learning with the
  Parameter Server.}. In \bibinfo{booktitle}{\emph{OSDI}}.
\newblock


\bibitem[\protect\citeauthoryear{Lin, Han, Mao, Wang, and Dally}{Lin
  et~al\mbox{.}}{2017}]%
        {lin2017deep}
\bibfield{author}{\bibinfo{person}{Yujun Lin}, \bibinfo{person}{Song Han},
  \bibinfo{person}{Huizi Mao}, \bibinfo{person}{Yu Wang}, {and}
  \bibinfo{person}{William~J Dally}.} \bibinfo{year}{2017}\natexlab{}.
\newblock \showarticletitle{Deep gradient compression: Reducing the
  communication bandwidth for distributed training}.
\newblock \bibinfo{journal}{\emph{arXiv preprint arXiv:1712.01887}}
  (\bibinfo{year}{2017}).
\newblock


\bibitem[\protect\citeauthoryear{Nemirovski, Juditsky, Lan, and
  Shapiro}{Nemirovski et~al\mbox{.}}{2009}]%
        {nemirovski2009robust}
\bibfield{author}{\bibinfo{person}{Arkadi Nemirovski}, \bibinfo{person}{Anatoli
  Juditsky}, \bibinfo{person}{Guanghui Lan}, {and} \bibinfo{person}{Alexander
  Shapiro}.} \bibinfo{year}{2009}\natexlab{}.
\newblock \showarticletitle{Robust stochastic approximation approach to
  stochastic programming}.
\newblock \bibinfo{journal}{\emph{SIAM Journal on optimization}}
  \bibinfo{volume}{19}, \bibinfo{number}{4} (\bibinfo{year}{2009}),
  \bibinfo{pages}{1574--1609}.
\newblock


\bibitem[\protect\citeauthoryear{Renggli, Alistarh, and Hoefler}{Renggli
  et~al\mbox{.}}{2018}]%
        {sparsecml}
\bibfield{author}{\bibinfo{person}{Cédric Renggli}, \bibinfo{person}{Dan
  Alistarh}, {and} \bibinfo{person}{Torsten Hoefler}.}
  \bibinfo{year}{2018}\natexlab{}.
\newblock \showarticletitle{SparCML: High-Performance Sparse Communication for
  Machine Learning.}
\newblock \bibinfo{journal}{\emph{CoRR}}  \bibinfo{volume}{abs/1802.08021}
  (\bibinfo{year}{2018}).
\newblock


\bibitem[\protect\citeauthoryear{Rhu, Gimelshein, Clemons, Zulfiqar, and
  Keckler}{Rhu et~al\mbox{.}}{2016}]%
        {rhu2016vdnn}
\bibfield{author}{\bibinfo{person}{Minsoo Rhu}, \bibinfo{person}{Natalia
  Gimelshein}, \bibinfo{person}{Jason Clemons}, \bibinfo{person}{Arslan
  Zulfiqar}, {and} \bibinfo{person}{Stephen~W Keckler}.}
  \bibinfo{year}{2016}\natexlab{}.
\newblock \showarticletitle{vDNN: Virtualized deep neural networks for
  scalable, memory-efficient neural network design}. In
  \bibinfo{booktitle}{\emph{Microarchitecture (MICRO), 2016 49th Annual
  IEEE/ACM International Symposium on}}.
\newblock


\bibitem[\protect\citeauthoryear{Seide, Fu, Droppo, Li, and Yu}{Seide
  et~al\mbox{.}}{2014}]%
        {seide20141}
\bibfield{author}{\bibinfo{person}{Frank Seide}, \bibinfo{person}{Hao Fu},
  \bibinfo{person}{Jasha Droppo}, \bibinfo{person}{Gang Li}, {and}
  \bibinfo{person}{Dong Yu}.} \bibinfo{year}{2014}\natexlab{}.
\newblock \showarticletitle{1-bit stochastic gradient descent and its
  application to data-parallel distributed training of speech DNNs}. In
  \bibinfo{booktitle}{\emph{Fifteenth Annual Conference of the International
  Speech Communication Association}}.
\newblock


\bibitem[\protect\citeauthoryear{Szegedy, Ioffe, Vanhoucke, and Alemi}{Szegedy
  et~al\mbox{.}}{2017}]%
        {szegedy2017inception}
\bibfield{author}{\bibinfo{person}{Christian Szegedy}, \bibinfo{person}{Sergey
  Ioffe}, \bibinfo{person}{Vincent Vanhoucke}, {and}
  \bibinfo{person}{Alexander~A Alemi}.} \bibinfo{year}{2017}\natexlab{}.
\newblock \showarticletitle{Inception-v4, inception-resnet and the impact of
  residual connections on learning.}. In \bibinfo{booktitle}{\emph{AAAI}}.
\newblock


\bibitem[\protect\citeauthoryear{Wang, Yang, Min, and Chakradhar}{Wang
  et~al\mbox{.}}{2017}]%
        {wang2017accelerating}
\bibfield{author}{\bibinfo{person}{Linnan Wang}, \bibinfo{person}{Yi Yang},
  \bibinfo{person}{Renqiang Min}, {and} \bibinfo{person}{Srimat Chakradhar}.}
  \bibinfo{year}{2017}\natexlab{}.
\newblock \showarticletitle{Accelerating deep neural network training with
  inconsistent stochastic gradient descent}.
\newblock \bibinfo{journal}{\emph{Neural Networks}}  \bibinfo{volume}{93}
  (\bibinfo{year}{2017}), \bibinfo{pages}{219--229}.
\newblock


\bibitem[\protect\citeauthoryear{Wang, Ye, Zhao, Wu, Li, Song, Xu, and
  Kraska}{Wang et~al\mbox{.}}{2018}]%
        {Wang:2018:SDG:3178487.3178491}
\bibfield{author}{\bibinfo{person}{Linnan Wang}, \bibinfo{person}{Jinmian Ye},
  \bibinfo{person}{Yiyang Zhao}, \bibinfo{person}{Wei Wu}, \bibinfo{person}{Ang
  Li}, \bibinfo{person}{Shuaiwen~Leon Song}, \bibinfo{person}{Zenglin Xu},
  {and} \bibinfo{person}{Tim Kraska}.} \bibinfo{year}{2018}\natexlab{}.
\newblock \showarticletitle{Superneurons: Dynamic GPU Memory Management for
  Training Deep Neural Networks}. In \bibinfo{booktitle}{\emph{Proceedings of
  the 23rd ACM SIGPLAN Symposium on Principles and Practice of Parallel
  Programming}}.
\newblock


\bibitem[\protect\citeauthoryear{Wangni, Wang, Liu, and Zhang}{Wangni
  et~al\mbox{.}}{2018}]%
        {wangni2018gradient}
\bibfield{author}{\bibinfo{person}{Jianqiao Wangni}, \bibinfo{person}{Jialei
  Wang}, \bibinfo{person}{Ji Liu}, {and} \bibinfo{person}{Tong Zhang}.}
  \bibinfo{year}{2018}\natexlab{}.
\newblock \showarticletitle{Gradient sparsification for communication-efficient
  distributed optimization}. In \bibinfo{booktitle}{\emph{Advances in Neural
  Information Processing Systems}}. \bibinfo{pages}{1306--1316}.
\newblock


\bibitem[\protect\citeauthoryear{Wen, Xu, Yan, Wu, Wang, Chen, and Li}{Wen
  et~al\mbox{.}}{2017}]%
        {wen2017terngrad}
\bibfield{author}{\bibinfo{person}{Wei Wen}, \bibinfo{person}{Cong Xu},
  \bibinfo{person}{Feng Yan}, \bibinfo{person}{Chunpeng Wu},
  \bibinfo{person}{Yandan Wang}, \bibinfo{person}{Yiran Chen}, {and}
  \bibinfo{person}{Hai Li}.} \bibinfo{year}{2017}\natexlab{}.
\newblock \showarticletitle{Terngrad: Ternary gradients to reduce communication
  in distributed deep learning}. In \bibinfo{booktitle}{\emph{Advances in
  Neural Information Processing Systems}}.
\newblock


\bibitem[\protect\citeauthoryear{Zhao, Wang, Wu, Bosilca, Vuduc, Ye, Tang, and
  Xu}{Zhao et~al\mbox{.}}{2017}]%
        {zhao2017efficient}
\bibfield{author}{\bibinfo{person}{Yiyang Zhao}, \bibinfo{person}{Linnan Wang},
  \bibinfo{person}{Wei Wu}, \bibinfo{person}{George Bosilca},
  \bibinfo{person}{Richard Vuduc}, \bibinfo{person}{Jinmian Ye},
  \bibinfo{person}{Wenqi Tang}, {and} \bibinfo{person}{Zenglin Xu}.}
  \bibinfo{year}{2017}\natexlab{}.
\newblock \showarticletitle{Efficient Communications in Training Large Scale
  Neural Networks}. In \bibinfo{booktitle}{\emph{Proceedings of the on Thematic
  Workshops of ACM Multimedia 2017}}.
\newblock


\end{thebibliography}

\appendix

\newpage
\onecolumn
\begin{lemma}
	\label{lemma:descent_appendix}
	Assume $\eta_t\leq\frac{1}{4L}, \theta_t^2\leq \frac{1}{4}$. Then
	\be
	\label{lm:descent_appendix}
	\frac{\eta_t}{4}\E[\|\nabla f(\xt)\|^2] \leq \E[f(\xt)] - \E[f(\xte)] + (L\eta_t + \theta_t^2)\frac{\eta_t\sigma^2}{2b_t}.
	\ee
\end{lemma}

\begin{proof}
	By Lipschitz continuity,
	\beaa
	f(\xte) & = & f(\xt - \eta_t\hat{v}_t)\\
	& \leq & f(\xt) + \langle \nabla f(\xt), -\eta_t\hat{v}_t\rangle + \frac{L}{2}\|\eta_t\hat{v}_t\|^2 \\
	& = & f(\xt) + \langle \nabla f(\xt), -\eta_t v_t\rangle + \langle \nabla f(\xt), -\eta_t(\hat{v}_t-v_t)\rangle + \frac{L}{2}\|\eta_t\hat{v}_t\|^2\\
	& \leq & f(\xt) + \langle \nabla f(\xt), -\eta_t v_t\rangle + \eta_t\|\nabla f(\xt)\|\|\hat{v}_t-v_t\| + \frac{L}{2}\|\eta_t\hat{v}_t\|^2\\
	& \leq & f(\xt) + \langle \nabla f(\xt), -\eta_t v_t\rangle + \eta_t\|\nabla f(\xt)\|\|\hat{v}_t-v_t\| + \frac{L}{2}\eta_t^2\|v_t\|^2\\
	& \leq & f(\xt) + \langle \nabla f(\xt), -\eta_t v_t\rangle + \frac{\eta_t}{2}\|\nabla f(\xt)\|^2 + \frac{\eta_t}{2}\|\hat{v}_t-v_t\|^2 + \frac{L}{2}\eta_t^2\|v_t\|^2\\
	& \leq & f(\xt) + \langle \nabla f(\xt), -\eta_t v_t\rangle + \frac{\eta_t}{2}\|\nabla f(\xt)\|^2 + \frac{\eta_t\theta_t^2}{2}\|v_t\|^2 + \frac{L}{2}\eta_t^2\|v_t\|^2\\
	& = & f(\xt) + \langle \nabla f(\xt), -\eta_t v_t\rangle + \frac{\eta_t}{2}\|\nabla f(\xt)\|^2 + \frac{\eta_t}{2}(L\eta_t + \theta_t^2)\|v_t\|^2\\
	\eeaa
	Note that conditioning on $\xt$, $\E[v_t|\xt] = \nabla f(\xt)$. Then take expectation on both sides, and use the relationship  that
	\beaa
	\E[\|v_t\|^2|\xt] & = & \E[\|v_t-\nabla f(x)\|^2|\xt] + \|\E[v_t|\xt]\|^2\\
	& = & \E[\|v_t-\nabla f(x)\|^2|\xt] +\|\nabla f(\xt)\|^2\\
	&\leq & \frac{\sigma^2}{b_t} + 	\|\nabla f(\xt)\|^2
	\eeaa
	We have
	\beaa
	\E[f(\xte)|\xt]& \leq & f(\xt) + \langle \nabla f(\xt), -\eta_t\E[v_t|\xt]\rangle + \frac{\eta_t}{2}\|\nabla f(\xt)\|^2 + \frac{\eta_t}{2}(L\eta_t + \theta_t^2)\E[\|v_t\|^2|\xt]\\
	& \leq & f(\xt) - \frac{\eta_t}{2}\|\nabla f(\xt)\|^2 + \frac{\eta_t}{2}(L\eta_t + \theta_t^2)(\frac{\sigma^2}{b_t} + \|\nabla f(\xt)\|^2)\\
	& = & f(\xt) - \frac{\eta_t}{2}(1-L\eta_t - \theta_t^2)\|\nabla f(\xt)\|^2 +  (L\eta_t + \theta_t^2)\frac{\eta_t\sigma^2}{2b_t}
	\eeaa
	Note that $L\eta_t\leq 1/4, \theta_t^2\leq 1/4$, and take expectation over the history, we get
	\beaa
	\E[f(\xte)]& \leq & \E[f(\xt)] - \frac{\eta_t}{4}\E[\|\nabla f(\xt)\|^2] +  (L\eta_t + \theta_t^2)\frac{\eta_t\sigma^2}{2b_t}
	\eeaa
	The lemma is proved.
\end{proof}

\end{document}